\documentclass{article}
\usepackage{amsmath,amssymb,latexsym,stmaryrd,amsthm,graphicx,color,framed}

\parskip = 1ex \parindent = 0em

\newtheorem{theorem}{\noindent Theorem}
\newtheorem{lemma}{\noindent Lemma}

\newtheorem{definition}{\noindent Definition}

\newcommand{\tuple}[1]{{\mbox{$\langle#1\rangle$}}}

\newcommand{\ignore}[1]{}

\def\unnamed#1#2{
\parbox{2.0in}{
\begin{tabbing}
\hspace{1em}\= #1 \\ \> \parbox{.75in}{\noindent \hrule ~} #2
\end{tabbing}
}}

\def\ant#1{\\ \> $#1$}

\newcommand{\intype}[2]{#1\!:\!#2}

\def\dcolon{\mathrel{:}\joinrel\mathrel{\mkern+5mu}\joinrel\mathrel{:}}
\newcommand{\inntype}[2]{#1\dcolon#2}
\newcommand{\bool}{\mathrm{\bf Bool}}

\newcommand{\sett}{\mathbf{Set}}

\newcommand{\class}{\mathbf{Class}}

\newcommand{\double}[1]{\left\llbracket #1 \right\rrbracket}

\def\unnamed#1#2{
\parbox{2.0in}{
\begin{tabbing}
\hspace{1em}\= #1 \\ \> \parbox{.75in}{\noindent \hrule ~} #2
\end{tabbing}
}}

\newcommand{\type}{\mathbf{Type}}

\newcommand{\true}{\mathbf{ True}} \newcommand{\false}{\mathbf{False}}

\newcommand{\bij}{\mathbf{Bij}}

\title{MathZero, the Classification Problem, and Set-Theoretic Type Theory}

\author{David McAllester\footnote{Toyota Technological Institute at Chicago}}

\date{}

\begin{document}
\maketitle{}

\begin{abstract}
  AlphaZero learns to play go, chess and shogi at a superhuman level
  through self play given only the rules of the game.  This raises the
  question of whether a similar thing could be done for mathematics ---
  a MathZero. MathZero would require a formal foundation and an objective.
  We propose the foundation of set-theoretic dependent
  type theory and an objective
  defined in terms of the classification problem --- the problem of classifying concept instances
  up to isomorphism.  The natural numbers arise as the solution to
  the classification problem for finite sets.  Here we generalize classical Bourbaki set-theoretic isomorphism
  to set-theoretic dependent type theory.  To our knowledge we give the first isomorphism inference rules for
  set-theoretic dependent type theory with propositional set-theoretic equality.
  The presentation is intended to be accessible to mathematicians with no prior exposure to type theory.
\end{abstract}
  
\section{Introduction}

AlphaZero learns to play go, chess and shogi at a superhuman level
through self play given only the rules of the game.  This raises the
question of whether a similar thing could be done for mathematics ---
a MathZero.  A necessary first step
is a definition of some form of ``mathematics game''. What are the rules
and what is the objective?  Presumably the rules correspond to some formal foundation
and the objective somehow involves formulating and proving theorems.
Here we propose a foundation of set-theoretic dependent type theory
and an objective based on the classification problem --- the problem
of enumerating the instances of a given concept up to isomorphism.

Isomorphism is central to the structure of mathematics.
Mathematics is organized around concepts such as graphs, groups, topological
spaces and manifolds each of which is associated with a notion of
isomorphism.  Each concept is associated with a classification problem
--- the problem of enumerating the instances of a given concept up to
isomorphism.  We also have the related notions of symmetry and
canonicality.  There is no canonical point on a geometric circle ---
any point can be mapped to any other point by rotating the circle.  A
rotation of the circle is an isomorphism of the circle with itself ---
a symmetry or automorphism.  Similarly, there is no canonical basis
for a finite dimensional vector space.  For any basis there is a
symmetry (automorphism) of the vector space which moves the basis to a
different basis.  Isomorphism is also central to understanding representation.
A group can be represented by a family of permutations.  Different
(non-isomorphic) families of permutation can represent the same group
(up to isomorphism).

People have a strong intuition that isomorphic objects are ``the same''.
We talk about ``the'' complete graph on five nodes or ``the''
topological sphere. A graph-theoretic
property must have the same truth value on any two isomorphic graphs.
This can be viewed as a tautology --- a property is
by definition graph-theoretic if it respects graph isomorphism.
But this tautological interpretation misses
the fact that being a graph theoretic property can be guaranteed by syntactic
well-formedness conditions.
Under an appropriate type system syntactically well-formed expressions automatically
respect isomorphism.

Martin-L\"{o}f dependent type theory \cite{mltt} is capable of expressing
statements and proofs of general mathematics \cite{COQ} while placing strong
syntactic well-formedness constraints on expressions. However, dependent type
theory is typically formulated in terms of propositions as types with
equality handled by axiom J.  This formulation can be challenging for
the uninitiated. Typically no meaning for the notation is specified with various interpretations (models)
being possible.

From the beginning it has been clear that dependent type theory can be
given a set-theoretic interpretation --- a set-theoretic model.  However,
this interpretation has generally been eschewed by type theorists.
Here we consider a version of set-theoretic type theory that avoids propositions as
types and axiom J and that gives transparent set-theoretic meanings for all
constructs. This semantically transparent type theory is adequate for formalizing general
mathematics.  We show that syntactic well-formedness constraints inherited from Martin-L\"{o}f type theory
can guarantee that all expressions respect isomorphism.

A definition of structures and isomorphism was given by Bourbaki.
As interpreted within type theory, Bourbaki defines a particular class of structures, such as the class of groups,
by a type expression.  Bourbaki-style structure types are restricted to have
all set variables declared at the top level of the expression. The top level set variables declare ``carrier sets''.
Most familiar structure classes, such as groups, have only a single carrier set --- the group elements.
In addition to the carrier sets, a structure type specifies functions and predicates over the carrier sets --- the structure
imposed on the carriers.
Two instances of a structure type are isomorphic if there exists a system of bijections
between their carrier sets which identifies their structure --- which carries the structure of the first to
the structure of the second. The classical definition of a structure
type specifies structure by a simple (non-dependent) type.  We generalize that here to handle dependent structure
types.

Section \ref{sec:rules} gives two inference rules for isomorphism.
The first is the congruence rule which states
that isomorphic objects are inter-substitutable in well-formed contexts.
The second is the structure rule which states
a generalization of Bourbaki isomorphism.
The main result of this paper is a semantic definition of isomorphism applying to all types (not just structure types)
and a proof of isomorphism congruence.  We also show that the structure rule and the congruence rule together
are complete for isomorphism --- they allow any isomorphism relationship to be expressed directly
in the base language without linguistic extensions for isomorphism.

Isomorphism was handled at an abstract level in the groupoid model of type theory \cite{GRPD}.
However, a purely set-theoretic treatment with transparent set-theoretic semantics and set-theoretic propositional equality
should make the formal treatment more accessible to mathematicians outside of the type theory community.
To our knowledge inference rules for isomorphism in this setting have not previously been formulated.

Section~\ref{sec:notation} gives a high level presentation of set-theoretic dependent type theory
including the inference rules for isomorphism.
Section~\ref{sec:semantics} more formally defines the semantics and well-formedness
constraints.
Section~\ref{sec:isomorphism} gives a semantic definition of isomorphism covering all
type expressions and proves isomorphism congruence.

\section{An Overview of Set-Theoretic Type Theory}
\label{sec:notation}

This section first lists the constructs of the type theory and
their set-theoretic meanings.  It then describes the well-formedness constraints
at a high level with the precise specification given in section~\ref{sec:semantics}.
It states the structure rule and the congruence rule
and defines cryptomorphism. Cryptomorphism is used to
show that the structure and congruence rules are complete for isomorphism.

\subsection{The Constructs}
\label{sec:meaning}

We consider a set-theoretic dependent type theory consisting of the following constructs.

\begin{itemize}
\item The constant symbol $\sett$ denotes the class of all sets.

\item The constant symbol $\bool$ denotes the type containing the two
  truth values.

\item A pair expression $\tuple{s,u}$ denotes the pair of $s$ and $u$
  and projections $\pi_1(e)$ and $\pi_2(e)$ denote the first and
  second component respectively of the pair $e$.
\end{itemize}

\begin{itemize}
\item A lambda expression $\lambda\;\intype{x}{\sigma}\; e[x]$ denotes
  the (set-theoretic) function mapping an element $x$ in the set $\sigma$ to
  the value $e[x]$.

\item An application $f(e)$ denotes the value of the function $f$ on
  argument $e$.
\end{itemize}

\begin{itemize}
\item The formula $e_1 = e_2$ is true if $e_1$ is the same as $e_2$
  (set-theoretic equality).

\item $\forall\intype{x}{\sigma}\;\Phi[x]$ is true if for every element $x$
  of the set or class $\sigma$ we have that $\Phi[x]$ is true.

\item The Boolean formulas $\neg \Phi$, $\Phi \vee \Psi$, $\Phi  \Rightarrow \Psi$
  and $\Phi \Leftrightarrow \Psi$ have their
  classical Boolean meaning.
\end{itemize}

\begin{itemize}
\item The dependent function type
  $\Pi_{\intype{x\;}{\;\sigma}}\;\tau[x]$ denotes the set of all
  set-theoretic functions $f$ with domain set $\sigma$ and such that
  for all $x \in \sigma$ we have that $f(x) \in \tau[x]$. If $x$ does
  not occur in $\tau$ we abbreviate
  $\Pi_{\intype{x\;}{\;\sigma}}\;\tau$ as $\sigma \rightarrow \tau$.

\item The dependent pair type $\sum_{\intype{x\;}{\;\sigma}}\;\tau[x]$ denotes the
  set or class of all pairs $\tuple{x,y}$ with $x \in  \sigma$ and $y \in \tau[x]$.
  If $x$ does not occur in $\tau$ then
  we abbreviate $\sum_{\intype{x\;}{\;\sigma}}\;\tau$ as $\sigma \times \tau$.

\item The subtype $S_{\intype{x\;}{\;\sigma}}\;\Phi[x]$ denotes the
  set or class of all $x \in \sigma$ such that $\Phi[x]$ is true.
\end{itemize}

We distinguish set-level expressions from class expressions both syntactically and semantically.
An expression is syntactically set-level
if occurrences of the constant $\sett$ are restricted to be inside formulas (Boolean expressions).
A class expression is one in which the constant $\sett$ occurs outside of a formula.
Set-level expressions denote
set-level values while class expressions denote classes which are taken to be of a different semantic kind from sets.
The set/class distinction is discussed in more detail in section~\ref{sec:universe}.

As an example, the class of all groups can be defined as
\begin{eqnarray*}
\mathrm{Magma} & \equiv & \sum_{\intype{s\;}{\;\sett}}\;s \times s
\rightarrow s \\ \\ \mathrm{Group} & \equiv &
S_{\intype{M\;}{\;\mathrm{Magma}}}\;\Phi(M)
\end{eqnarray*}
where $\Phi(M)$ states the group axioms.

Under the meanings defined above the same value can be in different types.  Most
significantly, a pair $\tuple{u,v}$ in the type $\sum_{\intype{x\;}{\;\sigma}}\;\tau[x]$
is also in the type $\sigma \times \tau[v]$.  The well-formedness conditions do not prevent this. A pair $\tuple{s,f}$ in the type Magma
is also in the very different type $\sett \times ((s \times s) \rightarrow s)$.

As another example consider $\sum_{\intype{s\;}{\;\sett}}\;\sum_{\intype{w\;}{\;\sett}}\;s\rightarrow w$
and  $\sum_{\intype{s\;}{\;\sett}}\;s\rightarrow w$ where $w$ is a free set variable in the second expression.
The first type allows
isomorphisms defined by two bijections --- bijections on both $s$ and $w$ --- while the second type only allows isomorphisms defined
by a single bijection on $s$.  The classification problem for the second type gives rise to the concept of a bag of (or multiset of) $w$.

It is important that functions are restricted to be set-level (as opposed to being functors on proper classes).
To see why consider a predicate $\intype{P}{\mathbf{Group} \rightarrow \bool}$.  In the
presence of such a predicate we have that $P(G)$ is a meaningful formula for any group $G$. In this
case we need that for two isomorphic groups $G$ and $G'$ we have $P(G) \Leftrightarrow P(G')$.  This
requires that the notion of isomorphism is built into the semantics of the type $\mathbf{Group}
\rightarrow \bool$.  This is done in the groupoid model \cite{GRPD}. Here we take a
different approach based on the observation that set-level function spaces require no such care ---
the fully naive definition of set-level function spaces just works.  Functors then arise naturally
as terms of the language.  Isomorphism congruence implies that functors defined by terms of the language automatically
respect isomorphism --- they are automatically functors between groupoids.

\subsection{Well-Formedness}

While the set-theoretic meanings of the constructs are transparent, the well-formedness conditions are more subtle.
Here we describe the well-formedness conditions at a high level and postpone a precise definition to section~\ref{sec:semantics}.

Well-formedness is relative to a context declaring the types for variables and
stating assumptions about those variables --- a
context $\Gamma$ consists of variable declarations
$\intype{x}{\tau}$ and assumptions $\Phi$.  The well-formedness conditions determine when an expression $e$ is well-formed under a context $\Gamma$.
Contexts themselves are subject to well-formedness constraints. For $\Gamma;\intype{x}{\sigma}$ to be well-formed
$\sigma$ must be a well-formed type expression under $\Gamma$ and for $\Gamma;\Phi$ to be well-formed $\Phi$ must be a well-formed Boolean expression under $\Gamma$.
All context are constructed in this way starting with the empty context $\epsilon$ and constants $\sett$ and $\bool$.
We write $\Gamma \models \intype{e}{\sigma}$ to mean that $\sigma$ is a well-formed type under $\Gamma$ and that $e$ is a well-formed expression under $\Gamma$ and
that for all variable interpretations satisfying the context $\Gamma$ we have that the value of $e$ is a member of the value of $\sigma$.
We write $\Gamma \models \Phi$ to mean that $\Phi$ is a well-formed Boolean expression under $\Gamma$ and that for every variable interpretation satisfying $\Gamma$
we have that $\Phi$ is true.

While the details of well-formedness constraints are postponed to section~\ref{sec:semantics} there are some cases that are particularly significant.
For a set-theoretic equality
$u = v$ to be well-formed we must first be able to derive $\Gamma \models \intype{u}{\sigma}$ and $\Gamma \models \intype{v}{\sigma}$
for some set expression $\sigma$.
Without this constraint we would have
$$\intype{s}{\sett};\;\intype{x}{s};\;\intype{S}{\left(\sum_{\intype{w\;}{\;\sett}}\;{w}\right)} \models \intype{(x=\pi_2(S))}{\bool}.$$
This would yield a counter-example to isomorphism congruence as isomorphic values of $S$ yield different truth values for the formula.

Similarly, for an application $f(e)$ to be well-formed under $\Gamma$ we must first be able to derive $\Gamma \models \intype{f}{(\prod_{\intype{x\;}{\;\sigma}}\;\tau[x])}$ and $\Gamma \models \intype{u}{\sigma}$
for some dependent function type $\prod_{\intype{x\;}{\;\sigma}}\;\tau[x]$.  This restriction is used in the proof of the bijectivity lemma stated in section~\ref{subsec:isomorphism}.

\subsection{Isomorphism Inference Rules}
\label{sec:rules}

We now extend the language with Boolean formulas of the form $u =_\sigma v$ which intuitively mean that $u$ and $v$ are isomorphic instances of the class $\sigma$.
A precise semantics for this formula for general class expressions is given in section~\ref{subsec:isomorphism}.
We give two inference rules for isomorphism --- the structure rule and the congruence rule.  The congruence rule is simpler and we state it first.

\vspace{-3ex}
~ \hfill\unnamed
    {
      \ant{\Gamma;\intype{x}{\sigma} \models \intype{e[x]}{\tau}\;\;\;\mbox{$x$ not free in $\tau$}}
      \ant{\Gamma \models u =_\sigma v}
    }{
      \ant{\Gamma \models e[u] =_\tau e[v]}
      }
\hfill ~

To state the structure rule we first define structure types.

\begin{definition}
  A structure type is either a set expression or a class expression of the form 
  $\sum_{\intype{s_1\;}{\;\sett}} \cdots \sum_{\intype{s_n\;}{\;\sett}}\;\tau$ where $\tau$ is a set expression.
\end{definition}

The traditional (Bourbaki) notion of structure type requires $\tau$ to be simple --- a type constructed
from non-dependent pair and non-dependent function types over the set variables $s_1$, $\ldots$, $s_n$.
Here we allow $\tau$ to be any (dependent) set expression.

An example of a class expression that is not a structure type is a group-with-action type
$$\sum_{\intype{G\;}{\;\mathrm{Group}}}\;\sum_{\intype{s\;}{\;\sett}}\;S_{\intype{f\;}{\;(\pi_1(G) \rightarrow (s \rightarrow s))}}\;\Phi[G,s,f]$$
where $\Phi[G,s,f]$ states the group action conditions.  Here the first component of a group-with-action is a group rather than a carrier set.

To state the structure rule we will use some additional notation.
For set expressions $s$ and $w$ we let $\mathrm{Bi}(s,w)$ abbreviate the set expression $S_{\intype{f\;}{\;(s\rightarrow w)}}\;\Phi[s,w,f]$ where $\Phi[s,w,f]$ states
that $f$ is a bijection.  Also, for $u$ an instance of the structure type $\sum_{\intype{s_1\;}{\;\sett}} \cdots \sum_{\intype{s_n\;}{\;\sett}}\;\tau$
we define $\dot{u}$ to be the substitution mapping a set variable $s_i$ to the value assigned to that variable in $u$.  For example $\dot{u}(s_1)$ is $\pi_1(u)$ and $\dot{u}(s_2)$
is $\pi_1(\pi_2(u))$.  For a substitution $\dot{u}$ and an expression $e$ we write $\dot{u}(e)$ for the result of applying substitution $\dot{u}$ to expression $e$.
We let $x(u)$ be the element of $\dot{u}(\tau)$ included in $u$. For example, if there are two carrier sets then $x(u)$ is
$\pi_2(\pi_2(u))$. The structure rule is then

\vspace{-3ex}
\unnamed{
  \ant{\sigma = \sum_{\intype{s_1\;}{\;\sett}} \cdots \sum_{\intype{s_n\;}{\;\sett}}\;\tau\;\;\mbox{$\tau$ a set expression}}
  \ant{\Gamma \models \intype{u}{\sigma}}
  \ant{\Gamma \models \intype{v}{\sigma}}
}{
  \ant{\Gamma \models (u =_\sigma v) \Leftrightarrow  \exists\;\; \intype{f_1}{\mathrm{Bi}(\dot{u}(s_1),\dot{v}(s_1))}\cdots\intype{f_n}{\mathrm{Bi}(\dot{u}(s_n),\dot{v}(s_n))}
    \;\;\mathrm{EQ}(\tau,\dot{u},x(u),\dot{v},x(v))}
}

where the formula $\mathrm{EQ}(\tau,\dot{x},x,\dot{y},y)$ is defined below.
In the formula $\mathrm{EQ}(\tau,\dot{x},x,\dot{y},y)$ we have that $\tau$, $x$,and $y$ are expressions and $\dot{x}$ and $\dot{y}$ are
syntactic substitutions with $\Gamma \models \intype{x}{\dot{x}(\tau)}$ and $\Gamma \models \intype{y}{\dot{y}(\tau)}$.
The formula $\mathrm{EQ}(\tau,\dot{x},x,\dot{y},y)$ is defined by the
following clauses.

\begin{eqnarray*}
  \mathrm{EQ}(\tau,\dot{x},x,\dot{y},y) & \equiv & x = y \;\mbox{for
    $\tau$ not containing any $s_i$}
  \\ \\ \mathrm{EQ}(s_i,\dot{x},x,\dot{y},y) & \equiv & f_i(x) = y
  \\ \\ \mathrm{EQ}\left(\left(S_{\intype{z\;}{\;\tau}}
  \Phi\right),\dot{x},x,\dot{y},y\right) & \equiv &
  \mathrm{EQ}(\tau,\dot{x},x,\dot{y},y)
  \\ \\ \mathrm{EQ}\left(\left(\sum_{\intype{z\;}{\;\tau_1}}
  \tau_2\right),\dot{x},x,\dot{y},y\right) & \equiv &
  \left\{\begin{array}{l}\mathrm{EQ}(\tau_1,\dot{x},\pi_1(x),\dot{y},\pi_1(y))
  \\ \\ \wedge \;\;
  \mathrm{EQ}(\tau_2,\dot{x}[z:=\pi_1(x)],\pi_2(x),\dot{y}[z:=\pi_1(y)],\pi_2(y))\end{array}\right. \\ \\ \\ \mathrm{EQ}\left(\left(\prod_{\intype{z\;}{\;\tau_1}}
  \tau_2\right),\dot{f},f,\dot{g},g\right) & \equiv &
  \left\{\begin{array}{l} \forall\;\;
  \intype{x}{\dot{f}(\tau_1)}\;\forall \;\intype{y}{\dot{g}(\tau_1)}
  \\ \\ ~\;\;\; \mathrm{EQ}(\tau_1,\dot{f},x,\dot{g},y)
  \\ \\ ~\;\;\;\;\;\;\Rightarrow
  \mathrm{EQ}(\tau_2,\dot{f}[z:=x],f(x),\dot{g}[z:=y],g(y)) \end{array}\right.
\end{eqnarray*}

\subsection{Functors, Cryptomorphism and Completeness}

Assuming isomorphism congruence, if we have
$\Gamma\;\intype{x}{\sigma} \models \intype{e[x]}{\tau}$ with $x$ not
free in $\tau$ then a lambda expression of the form $(\lambda\;
\intype{x}{\sigma} \;e[x])$ where $\sigma$ is a class expression can
be assigned a semantics as a functor mapping elements of the class
$\sigma$ to elements of the class $\tau$ and respecting isomorphism.
We show in section~\ref{sec:functor} that such a lambda expressions
also defines a mapping of isomorphisms (groupoid morphisms) and hence
defines a functor between groupoids.

Following the terminology of Birkoff \cite{Birk} and Rota \cite{Rota},
two classes $\sigma$ and $\tau$ that are well-formed under $\Gamma$
will be called {\em cryptomorphic} (under $\Gamma$) if there exist
functor-level lambda expressions $\intype{F}{\sigma \rightarrow \tau}$
and $\intype{G}{\tau \rightarrow \sigma}$ that are well-formed under
$\Gamma$ and where we have
$\Gamma;\;\intype{x}{\sigma} \models G(F(x)) = x$ and $\Gamma;\;\intype{y}{\tau} \models F(G(y)) = y$.

For example a group can be defined either as a four-tuple of a set, a
group operation, an inverse operation and an identity element or as a
set and group operation such that inverses and an identity exist.

We now show that every class expression is cryptomorphic to a
structure class.  To determine if $u =_\sigma v$ it then suffices to
show $F(u) =_\tau F(v)$ where $F$ is a cryptomorphism from $\sigma$ to
a structure class $\tau$.  {\bf Expressing isomorphism at structure
  classes then allows us to express isomorphism at every class.}

The proof that every class $\sigma$ is cryptomorphic to a structure
class is by induction on the class expression $\sigma$.  Function
types and the set $\bool$ are set expressions and are by definition
structure classes.  The class $\sett$ is cryptomorphic to
$\sum_{\intype{s\;}{\;\sett}} S_{\intype{P\;}{\;\bool}}\;P$.  For a
subclass expression $S_{\intype{x\;}{\;\sigma}}\;\Phi$ we can assume
by the induction hypothesis that $\sigma$ is cryptomorphic to a
structure class $\sum_{\intype{s_1\;}{\;\sett}} \cdots \sum_{\intype{s_n\;}{\;\sett}}\;\tau$ in which case
$S_{\intype{x\;}{\;\sigma}}\;\Phi[x]$ is cryptomorphic to a class
expression of the form $\sum_{\intype{s_1\;}{\;\sett}} \cdots \sum_{\intype{s_n\;}{\;\sett}}\;S_{\intype{x\;}{\;\tau}}\; \Phi'$.
Finally we consider a dependent pair class
$\sum_{\intype{x\;}{\;\sigma}}\;\tau$.  By the induction hypothesis
$\sigma$ and $\tau$ are cryptomorphic to classes
$$\sum_{\intype{s_1\;}{\;\sett}} \cdots \sum_{\intype{s_n\;}{\;\sett}}\;\tau_1$$ and
$$\sum_{\intype{s_{n+1}\;}{\;\sett}} \cdots \sum_{\intype{s_{n+m}\;}{\;\sett}}\;\tau_2$$ respectively.  In this
case the class $\sum_{\intype{x\;}{\;\sigma}}\;\tau$ is cryptomorphic
to a class of the form
$$\sum_{\intype{s_1\;}{\;\sett}} \cdots \sum_{\intype{s_{n+m}\;}{\;\sett}}\;\sum_{\intype{x\;}{\;\tau_1}}\;\tau_2'$$

\section{A More Rigorous Treatment}
\label{sec:semantics}

This section gives a rigorous treatment of set-theoretic dependent
type theory.  To make the presentation more widely accessible we start
by discussing the universe $V$ of all sets and the set/class
distinction.  We then give a formal recursive definition of Tarskian
semantic value functions which implicitly define well-formedness.
Finally, we address some technical issues arising from the structure
of the recursion defining the value functions.

\subsection{The Universe of Sets}
\label{sec:universe}

We assume a Grothendeick universe $V$.  A Grothendeick universe is a
set whose cardinality is inaccessible (definition omitted) and such
that for $U \subseteq V$ we have $U \in V$ if and only if the
cardinality of $U$ is less than the cardinality of $V$. A Grothendeick
universe is a model of ZFC set theory and by G\"{o}dell's
incompleteness theorem the existence of a Grothendeick universe cannot
be proved from ZFC. However, mathematicians generally accept that we
can talk about the class of all sets, the class of all groups, and so
on.  These classes are subsets of $V$ that are too large to be
elements of $V$.

For conceptual clarity we will work with tagged values. We
define basic values to be hereditarily tagged values over five tags.

\begin{definition}
  \label{def:basic}
  A basic value is either
  \begin{itemize}
  \item a tagged atom $\tuple{\mbox{\tt "atom"},x}$ where $x$ is an
    arbitrary element of $V$,
  \item a tagged truth value $\tuple{\mbox{\tt "Bool"},x}$ were $x$ is
    one of the two values 1 or 0 ($\true$ or $\false$),
  \item a tagged pair $\tuple{\mbox{\tt "pair"}, \tuple{x,y}}$ where
    $x$ and $y$ are (recursively) basic values,
  \item a tagged function $\tuple{\mbox{\tt "function"},f}$ where $f
    \in V$ is a function (a set of pairs) from basic values to basic
    values,
  \item or a tagged set $\tuple{\mbox{\tt "set"},s}$ where $s \in V$
    is a set of basic values.
  \end{itemize}
\end{definition}

This definition is recursive but one can show by induction on
set-theoretic containment (or rank) that there is a unique class
satisfying this recursive specification.

\subsection{Tarskian Value Functions}
\label{sec:tarskian}

We now formally define two semantic value functions. The first assigns
a meaning to well-formed contexts where a context is a sequence of
variable declarations and assumptions.  For a well-formed context
$\Gamma$ we have that ${\cal V}\double{\Gamma}$ is a set of variable
interpretations.  Each variable interpretation is mapping from a
finite set of variables to basic values.  For a variable
interpretation $\gamma \in {\cal V}\double{\Gamma}$, and a well-formed
expression $e$, we define ${\cal V}_\Gamma\double{e}\gamma$ to be the
value (meaning) of $e$ under the values assigned by $\gamma$ to the
variables in $e$.  The meaning functions ${\cal V}\double{\Gamma}$ and
${\cal V}_\Gamma\double{e}\gamma$ are defined by mutual recursion.
The definitions are first presented using simple naive recursion.
This recursion is not just the standard recursion on the structure of
expressions and the technical validity of the recursion is addressed
in section~\ref{sec:index}.

It is important that the value functions are partial.  The value
function ${\cal V}_\Gamma\double{e}\gamma$ is only defined when $e$ is
well-formed under $\Gamma$.  Here we equate well-formedness with
definedness. A context $\Gamma$ is well-formed if and only if ${\cal
  V}\double{\Gamma}$ is defined and an expression $e$ is well-formed
under a well-formed context $\Gamma$ if and only if for all $\gamma
\in {\cal V}\double{\Gamma}$ we have that ${\cal V}\double{e}\gamma$
is defined.

For a variable interpretation $\gamma$, a variable $x$ not assigned a
value in $\gamma$, and a value $v$ we write $\gamma[x:=v]$ for the
variable interpretation $\gamma$ extended to assign value $v$ to
variable $x.$

The semantic value function defined in this section has the property
that well-formed set-level expressions always denote basic values in
$V$ while class expressions denote either the empty class, which we
distinguish from the empty set, or a proper class --- a subset of $V$
too large to be an element of $V$.  We note that the empty class can
be written as $S_{\intype{s\;}{\;\sett}}\;\false$.

The value functions are defined by following recursive clauses.

\begin{itemize}
\item[(V1)] {\color{red} ${\cal V}\double{\epsilon}$}. ${\cal V}\double{\epsilon}$ is defined to be the set containing just the
  empty variable interpretation --- the variable interpretation not
  assigning any value to any variable.

\item[(V2)] {\color{red} ${\cal V}_\Gamma\double{\sett}\gamma$}. If
  ${\cal V}\double{\Gamma}$ is defined and $\gamma \in {\cal
  V}\double{\Gamma}$ then we define ${\cal
  V}_\Gamma\double{\sett}\gamma$ to be the class of all sets.

\item[(V3)] {\color{red} ${\cal V}_\Gamma\double{\bool}\gamma$}. If
  ${\cal V}\double{\Gamma}$ is defined and $\gamma \in {\cal
  V}\double{\Gamma}$ then we define ${\cal
  V}_\Gamma\double{\bool}\gamma$ to be the set containing the two
  truth values $\true$ and $\false$.

\item[(V4)] We say that $u$ is defined under $\Gamma$ if ${\cal  V}\double{\Gamma}$
  is defined and for all $\gamma \in {\cal  V}\double{\Gamma}$ we
  have that ${\cal V}_\Gamma\double{u}\gamma$ is defined.
  
\item[(V5)] We write write $\Gamma \models \inntype{\sigma}{\class}$
  to mean that $\sigma$ is defined under $\Gamma$ and for all
  $\gamma \in {\cal V}\double{\Gamma}$ we have that ${\cal V}_\Gamma\double{\sigma}\gamma$ is a class and similarly write
  $\Gamma \models \inntype{\sigma}{\type}$ to mean that $\sigma$ is
  defined under $\Gamma$ and for all $\gamma \in {\cal V}\double{\Gamma}$ we have that ${\cal
    V}_\Gamma\double{\sigma}\gamma$ is either a set or a class.

\item[(V6)] For a set-level expression $u$ we write $\Gamma \models  \intype{u}{\sigma}$ to mean that $u$ is defined under $\Gamma$ and
  $\Gamma \models \inntype{\sigma}{\type}$ and for all $\gamma \in {\cal V}\double{\Gamma}$
  we have $\tilde{u} \in \tilde{\sigma}$ where $\tilde{u} = {\cal V}_\Gamma\double{u}\gamma$
  and $\tilde{\sigma} = {\cal V}_\Gamma\double{\sigma}\gamma$.

\item[(V7)] For $\Gamma \models \intype{\Phi}{\bool}$ we write
  $\Gamma \models \Phi$ to mean that for all $\gamma \in {\cal V}\double{\Gamma}$ we have ${\cal V}_\Gamma\double{\Phi}\gamma = \true$.

\item[(V8)] {\color{red} ${\cal V}\double{\Gamma;\;\intype{x}{\sigma}}$}. If
  $\Gamma \models \inntype{\sigma}{\type}$, and $x$ is a variable not declared in
  $\Gamma$, then ${\cal V}\double{\Gamma;\;\intype{x}{\sigma}}$ is
  defined to be the class of variable interpretations of the form
  $\gamma[x:=\tilde{u}]$ for $\gamma \in {\cal V}\double{\Gamma}$ and
  $\tilde{u} \in {\cal V}_\Gamma\double{\sigma}\gamma$.

\item[(V9)] {\color{red} ${\cal V}\double{\Gamma;\;\Phi}$}. If $\Gamma \models \intype{\Phi}{\bool}$ then ${\cal V}\double{\Gamma;\;\Phi}$
  is defined to be the class of variable interpretations $\gamma \in  {\cal V}\double{\Gamma}$ such
  that ${\cal V}_\Gamma\double{\Phi}\gamma = \true$.

\item[(V10)] {\color{red} ${\cal V}_\Gamma\double{x}\gamma$}. If $x$
  is declared in $\Gamma$, and $\gamma \in {\cal V}\double{\Gamma}$,
  then we define ${\cal V}_\Gamma\double{x}\gamma$ to be $\gamma(x)$.
  
\item[(V11)] {\color{red} ${\cal V}_\Gamma\double{\sum_{\intype{x\;}{\;\sigma}}\;\tau[x]}\gamma$}. If
  $\Gamma \models \inntype{\sigma}{\type}$ and
  $\Gamma;\intype{x}{\sigma} \models \inntype{\tau[x]}{\type}$ then
  ${\cal V}_\Gamma\double{\sum_{\intype{x\;}{\;\sigma}}\;\tau[x]}\gamma$ is
  defined to be the set or class of all pairs
  $\tuple{\tilde{u},\tilde{v}}$ with $\tilde{u} \in {\cal V}_\Gamma\double{\sigma}\gamma$
  and $\tilde{v} \in {\cal V}_{\Gamma;\intype{x\;}{\;\sigma}}\double{\tau[x]}\gamma[x:=\tilde{u}]$.

\item[(V12)] {\color{red} ${\cal V}_\Gamma\double{S_{\intype{x\;}{\;\sigma}}\;\Phi[x]}\gamma$}. If
  $\Gamma \models \inntype{\sigma}{\type}$ and
  $\Gamma;\intype{x}{\sigma} \models \intype{\Phi[x]}{\bool}$ then
  ${\cal V}_\Gamma\double{S_{\intype{x\;}{\;\sigma}}\;\Phi[x]}\gamma$
  is defined to be the set or class of all $\tilde{u} \in {\cal V}_\Gamma\double{\sigma}\gamma$
  such that ${\cal V}_{\Gamma;\intype{x\;}{\;\sigma}}\double{\Phi[x]}\gamma[x:=\tilde{u}] = \true$.

\item[(V13)] {\color{red} ${\cal V}_\Gamma\double{\prod_{\intype{x\;}{\;\sigma}}\;\tau[x]}\gamma$}. If
  $\Gamma \models \intype{\sigma}{\sett}$ and
  $\Gamma;\intype{x}{\sigma} \models \intype{\tau[x]}{\sett}$ then
  \\ ${\cal V}_\Gamma\double{\prod_{\intype{x\;}{\;\sigma}}\;\tau[x]}\gamma$
  is defined to be the set of all functions $\tilde{f}$ with domain
  ${\cal V}_\Gamma\double{\sigma}\gamma$ and such that for all
  $\tilde{u} \in {\cal V}_\Gamma\double{\sigma}\gamma$ we have
  $\tilde{f}(\tilde{u}) \in \\ {\cal V}_{\Gamma;\intype{x\;}{\;\sigma}}\double{\tau[x]}\gamma[x:=\tilde{u}]$.
  
\item[(V14)] {\color{red} ${\cal V}_\Gamma\double{\Phi \vee
    \Psi}\gamma$}. If $\Gamma \models \intype{\Phi}{\bool}$ and
  $\Gamma \models \intype{\Psi}{\bool}$ then ${\cal
  V}_\Gamma\double{\Phi \vee \Psi}\gamma$ is defined to be $\true$ if
  either ${\cal V}_\Gamma\double{\Phi}\gamma = \true$ or ${\cal
    V}_\Gamma\double{\Psi}\gamma = \true$.  Similar definitions apply
  to the other Boolean operations.
  
\item[(V15)] {\color{red} ${\cal V}_\Gamma\double{f(u)}\gamma$}. If
  $\Gamma \models \intype{f}{\left(\prod_{\intype{x\;}{\;\sigma}}
  \;\tau[x]\right)}$ and $\Gamma \models \intype{u}{\sigma}$ then
  ${\cal V}_\Gamma\double{f(u)}\gamma$ is defined to be
  $\tilde{f}(\tilde{u})$ for $\tilde{f} = {\cal
  V}_\Gamma\double{f}\gamma$ and $\tilde{u} = {\cal
  V}_\Gamma\double{u}\gamma$.
  
\item[(V16)] {\color{red} ${\cal V}_\Gamma\double{u = v}\gamma$}. If
  $\Gamma \models \intype{u}{\sigma}$ and $\Gamma \models
  \intype{v}{\sigma}$ and $\Gamma \models \intype{\sigma}{\sett}$,
  then ${\cal V}_\Gamma\double{u = v}\gamma$ is defined to be $\true$
  if $\tilde{u} = \tilde{v}$ where $\tilde{u} = {\cal
    V}_\Gamma\double{u}\gamma$ and $\tilde{v} = {\cal
    V}_\Gamma\double{v}\gamma$.
  
\item[(V17)] {\color{red} ${\cal
    V}_\Gamma\double{\forall\;{\intype{x}{\sigma}}\;\Phi[x]}\gamma$}. If
  $\Gamma;\;\intype{x}{\sigma} \models \intype{\Phi[x]}{\bool}$ then
  ${\cal
  V}_\Gamma\double{\forall\;{\intype{x}{\sigma}}\;\Phi[x]}\gamma$ is
  defined to be $\true$ if $\Gamma;\intype{x}{\sigma} \models \Phi[x]$
  (and false otherwise).

\item[(V18)] {\color{red} ${\cal
    V}_\Gamma\double{\tuple{u,v}}\gamma$}. If $u$ and $v$ are
  set-level and defined under $\Gamma$ then ${\cal
    V}_\Gamma\double{\tuple{u,v}}\gamma$ is defined to be
  $\tuple{\tilde{u},\tilde{v}}$ with $\tilde{u} ={\cal
    V}_\Gamma\double{u}\gamma$ and $\tilde{v} = {\cal
    V}_\Gamma\double{v}\gamma$.
  
\item[(V19)] {\color{red} ${\cal
    V}_\Gamma\double{\pi_i(u)}\gamma$}. If $\Gamma \models
  \intype{u}{\left(\sum_{\intype{x\;}{\;\sigma}}\;\tau[x]\right)}$
  then ${\cal V}_\Gamma\double{\pi_i(u)}\gamma$ is defined to be
  $\pi_i(\tilde{u})$ where $\tilde{u} = {\cal
    V}_\Gamma\double{u}\gamma$.

\item[(V20)] {\color{red} ${\cal V}_\Gamma\double{\lambda
    \;\intype{x}{\sigma}\;e[x]}\gamma$}. If $\Gamma \models
  \intype{\sigma}{\sett}$ and $\Gamma;\intype{x}{\sigma} \models
  \intype{e[x]}{\tau[x]}$ then \\ ${\cal V}_\Gamma\double{\lambda
    \;\intype{x}{\sigma}\;e[x]}\gamma$ is defined to be the function
  $\tilde{f}$ with domain ${\cal V}_\Gamma\double{\sigma}\gamma$ and
  satisfying $\tilde{f}(\tilde{u}) = {\cal
    V}_{\Gamma;\intype{x\;}{\;\sigma}}\double{e[x]}\gamma[x:=\tilde{u}]$
  for all $\tilde{u} \in {\cal V}_\Gamma\double{\sigma}\gamma$.
\end{itemize}

\subsection{The Recursion Index}
\label{sec:index}

Tarskian value functions are typically defined by structural induction
on expressions.  However, the recursion in clauses (V1) through (V20)
is more subtle.  Clauses (V1) through (V20) can be viewed as acting
like inference rules.  For example consider (V16) defining the meaning
of set-theoretic equality.  Here we must be able to ``derive'' $\Gamma
\models \intype{u}{\sigma}$, $\Gamma \models \intype{v}{\sigma}$ and
$\Gamma \models \intype{\sigma}{\sett}$ in order to derive that the
equality $u=v$ is well-formed.  Once an expression is determined to be
well-formed its value is defined in the usual way by structural
induction on the expression using the semantic values specified in
section~\ref{sec:meaning}.

To formally treat the inference rule nature of clauses (V1) through
(V20) we introduce a recursion index $i \geq 0$ and write ${\cal
  V}^i\double{\Gamma}$ and ${\cal V}_\Gamma^i\double{e}\gamma$.  For
$i = 0$ we take ${\cal V}^0\double{\Gamma}$ and ${\cal
  V}^0_\Gamma\double{e}\gamma$ to be undefined for all $\Gamma$ and
$e$.  We then modify the clauses $(V1)$ through $(V20)$ so that they
define ${\cal V}^{i+1}\double{\Gamma}$ and ${\cal
  V}^{i+1}\double{e}{\gamma}$ in terms of ${\cal V}^i\double{\Gamma}$
and ${\cal V}^i\double{e}{\gamma}$.  This recursion index provides an
induction principle --- induction on the recursion index --- needed in
later proofs.

\begin{itemize}
\item[(V1)] {\color{red} ${\cal V}^{i+1}\double{\epsilon}$}. ${\cal
  V}^{i+1}\double{\epsilon}$ is defined to be the set containing just
  the empty variable interpretation --- the variable interpretation
  not assigning any value to any variable.

\item[(V2)] {\color{red} ${\cal
    V}^{i+1}_\Gamma\double{\sett}\gamma$}. If ${\cal
  V}^i\double{\Gamma}$ is defined and $\gamma \in {\cal
  V}^i\double{\Gamma}$ then we define ${\cal
  V}^{i+1}_\Gamma\double{\sett}\gamma$ to be the class of all sets.

\item[(V3)] {\color{red} ${\cal
    V}^{i+1}_\Gamma\double{\bool}\gamma$}. If ${\cal
  V}^i\double{\Gamma}$ is defined and $\gamma \in {\cal
  V}^i\double{\Gamma}$ then we define ${\cal
  V}^{i+1}_\Gamma\double{\bool}\gamma$ to be the set containing the
  two truth values $\true$ and $\false$.

\item[(V4)] We say that $u$ is defined at i under $\Gamma$ if ${\cal
  V}^i\double{\Gamma}$ is defined and for all $\gamma \in {\cal
  V}^i\double{\Gamma}$ we have that ${\cal
  V}^i_\Gamma\double{u}\gamma$ is defined.
  
\item[(V5)] We write write $\Gamma \models^i \inntype{\sigma}{\class}$
  to mean that $\sigma$ is defined at $i$ under $\Gamma$ and for all
  $\gamma \in {\cal V}^i\double{\Gamma}$ we have that ${\cal
    V}^i_\Gamma\double{\sigma}\gamma$ is a class and similarly write
  $\Gamma \models^i \inntype{\sigma}{\type}$ to mean that $\sigma$ is
  defined at $i$ under $\Gamma$ and for all $\gamma \in {\cal
    V}^i\double{\Gamma}$ we have that ${\cal
    V}^i_\Gamma\double{\sigma}\gamma$ is either a set or a class.

\item[(V6)] For a set-level expression $u$ we write $\Gamma \models^i
  \intype{u}{\sigma}$ to mean that $u$ is defined at $i$ under
  $\Gamma$ and $\Gamma \models^i \inntype{\sigma}{\type}$ and for all
  $\gamma \in {\cal V}^i\double{\Gamma}$ we have $\tilde{u} \in
  \tilde{\sigma}$ where $\tilde{u} = {\cal
    V}^i_\Gamma\double{u}\gamma$ and $\tilde{\sigma} = {\cal
    V}^i_\Gamma\double{\sigma}\gamma$.  We note that $\Gamma \models^i
  \intype{u}{\sigma}$ implies $\tilde{u} \in V$.

\item[(V7)] For $\Gamma \models^i \intype{\Phi}{\bool}$ we write
  $\Gamma \models^i \Phi$ to mean that for all $\gamma \in {\cal
  V}^i\double{\Gamma}$ we have ${\cal V}^i_\Gamma\double{\Phi}\gamma =
  \true$.

\item[(V8)] {\color{red} ${\cal
    V}^{i+1}\double{\Gamma;\;\intype{x}{\sigma}}$}. If $\Gamma
  \models^i \inntype{\sigma}{\type}$, and $x$ is a variable not
  declared in $\Gamma$, then ${\cal
    V}^{i+1}\double{\Gamma;\;\intype{x}{\sigma}}$ is defined to be the
  class of variable interpretations of the form $\gamma[x:=\tilde{u}]$
  for $\gamma \in {\cal V}^i\double{\Gamma}$ and $\tilde{u} \in {\cal
    V}^i_\Gamma\double{\sigma}\gamma$.

\item[(V9)] {\color{red} ${\cal V}^{i+1}\double{\Gamma;\;\Phi}$}. If
  $\Gamma \models^i \intype{\Phi}{\bool}$ then ${\cal
  V}^{i+1}\double{\Gamma;\;\Phi}$ is defined to be the class of
  variable interpretations $\gamma \in {\cal V}^i\double{\Gamma}$ such
  that ${\cal V}^i_\Gamma\double{\Phi}\gamma = \true$.

\item[(V10)] {\color{red} ${\cal V}^{i+1}_\Gamma\double{x}\gamma$}. If
  $x$ is declared in $\Gamma$, and $\gamma \in {\cal
  V}^i\double{\Gamma}$, then we define ${\cal
  V}^{i+1}_\Gamma\double{x}\gamma$ to be $\gamma(x)$.
  
\item[(V11)] {\color{red} ${\cal
    V}^{i+1}_\Gamma\double{\sum_{\intype{x\;}{\;\sigma}}\;\tau[x]}\gamma$}. If
  $\Gamma \models^i \inntype{\sigma}{\type}$ and
  $\Gamma;\intype{x}{\sigma} \models^i \inntype{\tau[x]}{\type}$ then
  ${\cal
  V}^{i+1}_\Gamma\double{\sum_{\intype{x\;}{\;\sigma}}\;\tau[x]}\gamma$
  is defined to be the set or class of all pairs
  $\tuple{\tilde{u},\tilde{v}}$ with $\tilde{u} \in {\cal
    V}^i_\Gamma\double{\sigma}\gamma$ and $\tilde{v} \in {\cal
    V}^i_{\Gamma;\intype{x\;}{\;\sigma}}\double{\tau[x]}\gamma[x:=\tilde{u}]$.

\item[(V12)] {\color{red} ${\cal
    V}^{i+1}_\Gamma\double{S_{\intype{x\;}{\;\sigma}}\;\Phi[x]}\gamma$}. If
  $\Gamma \models^i \inntype{\sigma}{\type}$ and
  $\Gamma;\intype{x}{\sigma} \models^i \intype{\Phi[x]}{\bool}$ then
  ${\cal
  V}^{i+1}_\Gamma\double{S_{\intype{x\;}{\;\sigma}}\;\Phi[x]}\gamma$
  is defined to be the set or class of all $\tilde{u} \in {\cal
    V}^i_\Gamma\double{\sigma}\gamma$ such that ${\cal
    V}^i_{\Gamma;\intype{x\;}{\;\sigma}}\double{\Phi[x]}\gamma[x:=\tilde{u}]
  = \true$.

\item[(V13)] {\color{red} ${\cal
    V}^{i+1}_\Gamma\double{\prod_{\intype{x\;}{\;\sigma}}\;\tau[x]}\gamma$}. If
  $\Gamma \models^i \intype{\sigma}{\sett}$ and
  $\Gamma;\intype{x}{\sigma} \models^i \intype{\tau[x]}{\sett}$ then
  \\ ${\cal
    V}^{i+1}_\Gamma\double{\prod_{\intype{x\;}{\;\sigma}}\;\tau[x]}\gamma$
  is defined to be the set of all functions $\tilde{f}$ with domain
  ${\cal V}^i_\Gamma\double{\sigma}\gamma$ and such that for all
  $\tilde{u} \in {\cal V}^i_\Gamma\double{\sigma}\gamma$ we have
  $\tilde{f}(\tilde{u}) \in \\ {\cal
    V}^i_{\Gamma;\intype{x\;}{\;\sigma}}\double{\tau[x]}\gamma[x:=\tilde{u}]$.
  
\item[(V14)] {\color{red} ${\cal V}^{i+1}_\Gamma\double{\Phi \vee
    \Psi}\gamma$}. If $\Gamma \models^i \intype{\Phi}{\bool}$ and
  $\Gamma \models^i \intype{\Psi}{\bool}$ then ${\cal
  V}^{i+1}_\Gamma\double{\Phi \vee \Psi}\gamma$ is defined to be
  $\true$ if either ${\cal V}^i_\Gamma\double{\Phi}\gamma = \true$ or
  ${\cal V}^i_\Gamma\double{\Psi}\gamma = \true$.  Similar definitions
  apply to the other Boolean operations.
  
\item[(V15)] {\color{red} ${\cal
    V}^{i+1}_\Gamma\double{f(u)}\gamma$}. If $\Gamma \models^i
  \intype{f}{\left(\prod_{\intype{x\;}{\;\sigma}} \;\tau[x]\right)}$
  and $\Gamma \models^i \intype{u}{\sigma}$ then ${\cal
    V}^{i+1}_\Gamma\double{f(u)}\gamma$ is defined to be
  $\tilde{f}(\tilde{u})$ for $\tilde{f} = {\cal
    V}^i_\Gamma\double{f}\gamma$ and $\tilde{u} = {\cal
    V}^i_\Gamma\double{u}\gamma$.
  
\item[(V16)] {\color{red} ${\cal V}^{i+1}_\Gamma\double{u =
    v}\gamma$}. If $\Gamma \models^i \intype{u}{\sigma}$ and $\Gamma
  \models^i \intype{v}{\sigma}$ and $\Gamma \models^i
  \intype{\sigma}{\sett}$, then ${\cal V}^{i+1}_\Gamma\double{u =
    v}\gamma$ is defined to be $\true$ if $\tilde{u} = \tilde{v}$
  where $\tilde{u} = {\cal V}_\Gamma\double{u}\gamma$ and $\tilde{v} =
  {\cal V}^i_\Gamma\double{v}\gamma$.
  
\item[(V17)] {\color{red} ${\cal
    V}^{i+1}_\Gamma\double{\forall\;{\intype{x}{\sigma}}\;\Phi[x]}\gamma$}. If
  $\Gamma;\;\intype{x}{\sigma} \models^i \intype{\Phi[x]}{\bool}$ then
  ${\cal
  V}^{i+1}_\Gamma\double{\forall\;{\intype{x}{\sigma}}\;\Phi[x]}\gamma$
  is defined to be $\true$ if $\Gamma;\intype{x}{\sigma} \models^i
  \Phi[x]$ (and false otherwise).

\item[(V18)] {\color{red} ${\cal
    V}^{i+1}_\Gamma\double{\tuple{u,v}}\gamma$}. If $u$ and $v$ are
  set-level and defined at $i$ under $\Gamma$ then ${\cal
    V}^{i+1}_\Gamma\double{\tuple{u,v}}\gamma$ is defined to be
  $\tuple{\tilde{u},\tilde{v}}$ with $\tilde{u} ={\cal
    V}^i_\Gamma\double{u}\gamma$ and $\tilde{v} = {\cal
    V}^i_\Gamma\double{v}\gamma$.
  
\item[(V19)] {\color{red} ${\cal
    V}^{i+1}_\Gamma\double{\pi_i(u)}\gamma$}. If $\Gamma \models^i
  \intype{u}{\left(\sum_{\intype{x\;}{\;\sigma}}\;\tau[x]\right)}$
  then ${\cal V}^{i+1}_\Gamma\double{\pi_i(u)}\gamma$ is defined to be
  $\pi_i(\tilde{u})$ where $\tilde{u} = {\cal
    V}^i_\Gamma\double{u}\gamma$.

\item[(V20)] {\color{red} ${\cal V}^{i+1}_\Gamma\double{\lambda
    \;\intype{x}{\sigma}\;e[x]}\gamma$}. If $\Gamma \models^i
  \intype{\sigma}{\sett}$ and $\Gamma;\intype{x}{\sigma} \models^i
  \intype{e[x]}{\tau[x]}$ then \\ ${\cal
    V}^{i+1}_\Gamma\double{\lambda \;\intype{x}{\sigma}\;e[x]}\gamma$
  is defined to be the function $\tilde{f}$ with domain ${\cal
    V}^i_\Gamma\double{\sigma}\gamma$ and satisfying
  $\tilde{f}(\tilde{u}) = {\cal
    V}^i_{\Gamma;\intype{x\;}{\;\sigma}}\double{e[x]}\gamma[x:=\tilde{u}]$
  for all $\tilde{u} \in {\cal V}^i_\Gamma\double{\sigma}\gamma$.
\end{itemize}

If $e$ is defined at $i$ for $\Gamma$ then the value of ${\cal
  V}^i_\Gamma\double{e}\gamma$ is defined compositionally on the
structure of the expression $e$ as is normally done in Tarskian
semantics. This implies that for values of $i$ for which ${\cal
  V}^i_\Gamma\double{e}\gamma$ is defined, this value is independent
of the choice of $i$.  This also holds for ${\cal
  V}^i\double{\Gamma}$.  So we can define ${\cal
  V}_\Gamma\double{e}\gamma$ to be ${\cal V}^i\double{\Gamma}\gamma$
for any $i$ where this is defined, and to be undefined if no such $i$
exists. ${\cal V}\double{\Gamma}$ is defined similarly.

\section{Isomorphism and Isomorphism Congruence}
\label{sec:isomorphism}

Section~\ref{sec:bijective} introduces ``bijective values''.  A
bijective value $\tilde{u}$ is one that has both a left projection $L(\tilde{u})$
and a right projection $R(\tilde{u})$.
A bijective set $\tilde{s}$ has the property that the set of pairs $\tuple{L(\tilde{u}),R(\tilde{u})}$
for $\tilde{u} \in \tilde{s}$
defines a bijection between the sets $L(\tilde{s})$ and $R(\tilde{s})$.

To define bijective values we add a new tag to the value space to get values
of the form $\tuple{${\tt "isopair"}$, \tuple{x,y}}$ where $x$ and $y$
are basic values (values that do not contain iso-pairs).  The left
projection of this iso-pair is $x$ and the right projection is $y$.  A
bijective set of iso-pairs is a bijective set.  But sets of pairs and sets of functions
can also be bijective.

Section~\ref{sec:bij} introduces an additional constant symbol $\bij$ into
the formal language denoting the class of all bijective sets. We will call the language including the constant $\bij$ the extended
language while the langauge not including $\bij$ will be called the base language.
The extended language only exists for the purpose of defining isomorphism over all classes
and for proving isomorphism congruence.  Section~\ref{sec:rules} already gives the inference
rules for isomorphism in the base language and a MathZero system can work entirely in the base language.
The extended language is a conservative extension of the base language in the sense that expressions in the base language have the same meaning
in the extended language.

Sections~\ref{sec:bilemma} proves that all values definable in the extended language are bijective. Section~\ref{sec:commutation} proves that the left and right projection functions commute with the
semantic value function.  Section~\ref{sec:functor} proves that functors carry isomorphisms (bijective values) as well as objects.
Section~\ref{subsec:isomorphism} gives the general definition of isomorphism and proves isomorphism congruence.

\subsection{Bijective Values}
\label{sec:bijective}

We first define a class of extended values similar to the class of
basic values of section~\ref{sec:universe} but extended with
iso-pairs.

\begin{definition}
  \label{def;extended}
  An extended value is either
  \begin{itemize}
  \item a tagged atom $\tuple{\mbox{\tt "atom"},x}$ where $x$ is an
    arbitrary element of $V$,
  \item a tagged truth value $\tuple{\mbox{\tt "Bool"},x}$ were $x$ is
    one of the two values 1 or 0 {\normalfont (}$\true$ or
    $\false${\normalfont)},
  \item a tagged pair $\tuple{\mbox{\tt "pair"}, \tuple{x,y}}$ where
    $x$ and $y$ are (recursively) extended values,
  \item a tagged function $\tuple{\mbox{\tt "function"},f}$ where $f$
    is a function (a set of pairs) from extended values to extended
    values,
  \item a tagged set $\tuple{\mbox{\tt "set"},s}$ where $s \in V$ is a
    set of extended values,
  \item or a tagged iso-pair $\tuple{\mbox {\tt "isopair"},
    \tuple{x,y}}$ where $x$ and $y$ are {\bf basic values}.
  \end{itemize}
\end{definition}

\begin{definition}[Left and Right Projections] We define the left projection function $L$ on extended values by the following clauses
  where the right projection function $R$ is defined similarly but with $R(\tuple{\mbox{\tt "isopair"},\tuple{u,w}}) = w.$
  \begin{itemize}
  \item $L(\tuple{\mbox{\tt"atom"},x}) = \tuple{\mbox{\tt "atom"},x}$.
  \item $L(\tuple{\mbox{\tt"Bool"},x}) = \tuple{\mbox{\tt "Bool"},x}$.
  \item $L(\tuple{\mbox{\tt "pair"},\tuple{u,w}}) = \tuple{\mbox{\tt
      "pair"}, \tuple{L(u),L(w)}}$
  \item $L(\tuple{\mbox{\tt "set"},s}) = \tuple{\mbox{\tt "set"},\{L(x):\;x\in s\}}$.
  \item $L(\tuple{\mbox{\tt "function"},f}) = \tuple{\mbox{\tt "function"},g}$ where $g$ is the function containing the
    mappings $L(u) \mapsto L(f(u))$ for $u$ in the domain of $f$.
  \item $L(\tuple{\mbox{\tt "isopair"},\tuple{u,w}}) = u$.
  \end{itemize}
\end{definition}

\begin{definition}
  A {\em bijective value} is either
  \begin{itemize}
  \item a tagged atom $\tuple{\mbox{\tt "atom"},x}$ where $x$ is an
    arbitrary element of $V$,
  \item a tagged truth value $\tuple{\mbox{\tt "Bool"},x}$ were $x$ is
    one of the two values 1 or 0 {\normalfont (}$\true$ or
    $\false${\normalfont)},
  \item a tagged pair $\tuple{\mbox{\tt "pair"}, \tuple{x,y}}$ where
    $x$ and $y$ are (recursively) bijective values,
  \item a tagged function $\tuple{\mbox{\tt "function"},f}$ where $f$
    is a function (a set of pairs) from bijective values to bijective values
    and where the domain of $f$ is a bijective set.
  \item a tagged set $\tuple{\mbox{\tt "set"},s}$ where $s \in V$ is a
    set of bijective values with the property that the set of pairs
    of the form $\tuple{L(\tilde{u}),R(\tilde{u})}$ for $\tilde{u} \in s$
    defines a bijection between $L(s)$ and $R(s)$.
  \item or a tagged iso-pair $\tuple{\mbox {\tt "isopair"}, \tuple{x,y}}$ where $x$ and $y$ are {\bf basic values}.
  \end{itemize}
\end{definition}

For any basic value $\tilde{u}$ we have that $L(\tilde{u}) = R(\tilde{u}) = \tilde{u}$ and $\tilde{u}$ is
bijective.  Also, for any bijection $f$ between two basic sets we
have that the set of iso-pairs of the form $\tuple{\mbox{\tt "isopair"},\tuple{x,f(x)}}$ for $x$ in the domain of $f$ is a
bijective set.
Under the general definition given in section~\ref{subsec:isomorphism} we have that two groups $G$ and $G'$ are isomorphic if there
exists a bijective group $\tilde{G}$ with $L(\tilde{G}) = G$ and $R(\tilde{G}) = G'$.

\subsection{The Constant $\bij$ and the Conservative Extension Lemma}
\label{sec:bij}

We now introduce an additional constant symbol $\bij$ denoting the class of all bijective sets.
This is done by modifying the clauses
of section~\ref{sec:semantics} in two ways.  First we modify the clause for $\sett$ and add a clause for $\bij$ as follows.

\begin{itemize}
\item[(V2)] {\color{red} ${\cal V}_\Gamma\double{\sett}\gamma$}. If
  ${\cal V}\double{\Gamma}$ is defined and $\gamma \in {\cal
  V}\double{\Gamma}$ then we define ${\cal
  V}_\Gamma\double{\sett}\gamma$ to be the class of all {\color{red} basic} sets.

\item[(V2.5)] {\color{red} ${\cal V}_\Gamma\double{\bij}\gamma$}. If
  ${\cal V}\double{\Gamma}$ is defined and $\gamma \in {\cal
  V}\double{\Gamma}$ then we define ${\cal
  V}_\Gamma\double{\bij}\gamma$ to be the class of all {\color{red} bijective} sets.
\end{itemize}

One should keep in mind that we have $\sett \subsetneq \bij$.

We also replace all occurances of $\sett$ in other clauses with $\bij$. For example,

\begin{itemize}
\item[(V16)] {\color{red} ${\cal V}_\Gamma\double{u = v}\gamma$}. If
  $\Gamma \models \intype{u}{\sigma}$ and $\Gamma \models
  \intype{v}{\sigma}$ and $\Gamma \models \intype{\sigma}{{\color{red} \sett}}$,
  then ${\cal V}_\Gamma\double{u = v}\gamma$ is defined to be $\true$
  if $\tilde{u} = \tilde{v}$ where $\tilde{u} = {\cal V}_\Gamma\double{u}\gamma$ and $\tilde{v} = {\cal V}_\Gamma\double{v}\gamma$.
\end{itemize}
becomes
\begin{itemize}
\item[(V16)] {\color{red} ${\cal V}_\Gamma\double{u = v}\gamma$}. If
  $\Gamma \models \intype{u}{\sigma}$ and $\Gamma \models
  \intype{v}{\sigma}$ and $\Gamma \models \intype{\sigma}{{\color{red} \bij}}$,
  then ${\cal V}_\Gamma\double{u = v}\gamma$ is defined to be $\true$
  if $\tilde{u} = \tilde{v}$ where $\tilde{u} = {\cal V}_\Gamma\double{u}\gamma$ and $\tilde{v} = {\cal V}_\Gamma\double{v}\gamma$.
\end{itemize}

The replacement of $\sett$ by $\bij$ also occurs in (V13), the definition of function types, and (V20), the definition of lambda expressions.  The definition of
$\Gamma \models \inntype{\sigma}{\mathbf{Class}}$ and $\Gamma \models \inntype{\sigma}{\mathbf{Type}}$ in clauses (V5) are intpreted as allowing classes and sets of bijective values.
The constant $\sett$ does not occur in other clauses.

\begin{lemma}[Conservative Extension Lemma]
  The extended language is a conservative extension of the base language.  More specifically,
  if ${\cal V}\double{\Gamma}$ is defined in the base langauge then it is also defined in the extended language and the set of variable interpretations
  ${\cal V}\double{\Gamma}$ is the same under both the base and extended languages.  Furthermore,
  if $u$ is defined under $\Gamma$ in the base language then for all $\gamma \in {\cal V}\double{\Gamma}$ we have that
  ${\cal V}_\Gamma\double{u}\gamma$ is defined in the extended language and is equal to the base language value.
\end{lemma}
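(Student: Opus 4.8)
The plan is to prove both halves of the lemma --- the equality of context interpretations and the equality of expression values --- simultaneously by induction on the recursion index $i$ introduced in section~\ref{sec:index}. The two value functions $\mathcal{V}\double{\Gamma}$ and $\mathcal{V}_\Gamma\double{e}\gamma$ are defined by mutual recursion, so any proof about their agreement across the two languages must likewise proceed by a joint induction. The recursion index is precisely the tool the paper set up for this purpose: I would prove, for every $i \geq 0$, that (a) if $\mathcal{V}^i\double{\Gamma}$ is defined in the base language then it is defined in the extended language and the two classes of variable interpretations coincide, and (b) if $e$ is a base-language expression defined at $i$ under a base-language context $\Gamma$, then for every $\gamma \in \mathcal{V}^i\double{\Gamma}$ the extended-language value $\mathcal{V}^i_\Gamma\double{e}\gamma$ is defined and equals the base-language value. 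The base case $i=0$ is trivial since both functions are undefined everywhere, and the final claim follows by taking the union over all $i$ using the index-independence noted at the end of section~\ref{sec:index}.

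For the inductive step I would go clause by clause through (V1)--(V20), checking that each definition gives the same result in both languages whenever the premises (which refer to stage $i$) already agree by the induction hypothesis. The crucial observation is that a base-language expression contains no occurrence of $\bij$, so the only clauses where the two languages genuinely differ are those modified in section~\ref{sec:bij}: the clause (V2) for $\sett$, the new clause (V2.5) for $\bij$ (which cannot fire on base-language input), and the clauses (V13), (V16), (V20) where a side-condition $\Gamma \models \intype{\sigma}{\sett}$ was relaxed to $\Gamma \models \intype{\sigma}{\bij}$. For (V2), I must note that $\mathcal{V}_\Gamma\double{\sett}\gamma$ now denotes the class of all \emph{basic} sets rather than all sets; but for a base-language expression $u$ with $\Gamma \models \intype{u}{\sett}$, the inductive hypothesis gives that $\tilde{u}$ equals its base-language value, which is a basic set, so membership in the (narrower) extended interpretation of $\sett$ still holds. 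The remaining clauses are syntax-directed and structural, and the induction hypothesis applied to the immediate subexpressions and subcontexts closes each case directly.

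The main obstacle --- and the only place where real content lives --- is the relaxation of the side-conditions in (V13), (V16), and (V20) from $\sett$ to $\bij$. I must argue that relaxing a premise can only \emph{enlarge} the domain of definedness, never change a value that was already defined in the base language. Concretely, suppose a base-language equality $u = v$ is well-formed in the base language; then by the base-language (V16) we have $\Gamma \models \intype{\sigma}{\sett}$ for some set expression $\sigma$, and since $\sett \subsetneq \bij$ (as the paper stresses) the extended premise $\Gamma \models \intype{\sigma}{\bij}$ also holds. The value assigned --- $\true$ exactly when $\tilde{u} = \tilde{v}$ --- is then identical in both languages by the induction hypothesis applied to $u$, $v$, and $\sigma$. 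The same pattern covers (V13) and (V20): a base-language function type or lambda that was well-formed via the $\sett$ condition remains well-formed via the weaker $\bij$ condition, and since the domain $\mathcal{V}_\Gamma\double{\sigma}\gamma$ and the fibre values agree by induction, the constructed function is literally the same set of pairs. The subtlety to handle carefully is directionality: definedness in the \emph{base} language is what I am given, and I must confirm it transfers \emph{upward} to the extended language with the value preserved; I need not (for this lemma) establish any reverse transfer, which is what makes the relaxed-premise argument go through cleanly.
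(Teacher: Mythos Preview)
Your proposal is correct and matches the paper's approach exactly: the paper's entire proof consists of the single sentence ``This lemma is proved by a straightforward induction on the computation index defined in section~\ref{sec:index},'' and your elaboration --- joint induction on $i$ over clauses (V1)--(V20), with the only nontrivial cases being the clauses where a $\sett$ side-condition was relaxed to $\bij$ --- is precisely the intended unpacking. One minor remark: your worry about (V2) narrowing $\sett$ to basic sets is moot, since in the base language all values are already basic, so the two interpretations of $\sett$ coincide as classes; but your argument handles this correctly in any case.
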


This lemma is proved by a straightforward induction on the computation index defined in section~\ref{sec:index}.

\subsection{The Bijectivity Lemma}
\label{sec:bilemma}

\begin{lemma}[Bijectivity Lemma]
  If $\Gamma \models \inntype{\sigma}{\class}$ in the extended language then every member of
  ${\cal V}_\Gamma\double{\sigma}\gamma$ is bijective and for every
  set-level expression $u$ defined under $\Gamma$ we have that
  ${\cal V}_\Gamma\double{u}\gamma$ is bijective.
\end{lemma}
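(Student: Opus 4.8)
The plan is to prove a strengthened invariant by induction on the recursion index $i$ of section~\ref{sec:index}, working throughout in the extended language. For every $i$ I would establish simultaneously: (a) if $\Gamma \models^i \inntype{\sigma}{\class}$ then every member of ${\cal V}^i_\Gamma\double{\sigma}\gamma$ is bijective; (b) if $u$ is a set-level expression defined at $i$ under $\Gamma$ then ${\cal V}^i_\Gamma\double{u}\gamma$ is bijective, and moreover if $u$ is a set-level type expression denoting a set $\tilde{s}$ then $L$ and $R$ are injective on $\tilde{s}$ (this injectivity is the operative content of ``the set of pairs $\tuple{L(\tilde{u}),R(\tilde{u})}$ defines a bijection between $L(\tilde{s})$ and $R(\tilde{s})$'', and it is what the later clauses actually consume); and (c) every $\gamma \in {\cal V}^i\double{\Gamma}$ assigns only bijective values. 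The case $i=0$ is vacuous since nothing is defined there, and the index-independence noted at the end of section~\ref{sec:index} lifts the result from each ${\cal V}^i$ to the index-free ${\cal V}_\Gamma$ of the statement. Part (c) is maintained by clause (V8): a value adjoined to $\gamma$ lies in ${\cal V}^i_\Gamma\double{\sigma}\gamma$, which by (a) or (b) consists of bijective values.

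Most clauses are routine. For (V2), members of $\sett$ are basic sets, hence basic, hence bijective with $L=R=\mathrm{id}$; for (V2.5), members of $\bij$ are bijective by definition; for (V3), ${\cal V}\double{\bool}\gamma=\{\true,\false\}$ is a set of basic values on which $L$ and $R$ are the identity. Variables (V10) are handled by (c). Pairs (V18) and projections (V19) are componentwise, and the truth-valued constructs --- equality (V16), $\forall$ (V17), and the Boolean operations (V14) --- return basic truth values. For a subtype (V12) the denoted set is a subset of ${\cal V}^i_\Gamma\double{\sigma}\gamma$, and both the bijective-value property and injectivity of $L,R$ are inherited by restriction to a subset. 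For a lambda (V20) the value is a function whose domain ${\cal V}^i_\Gamma\double{\sigma}\gamma$ is a bijective set and whose outputs ${\cal V}^i_{\Gamma;\intype{x}{\sigma}}\double{e[x]}\gamma[x:=\tilde{u}]$ are bijective by (b), so it is a bijective function. For an application (V15) the well-formedness restriction --- that one can derive $\Gamma\models^i \intype{f}{(\prod_{\intype{x\;}{\;\sigma}}\tau[x])}$ and $\Gamma\models^i\intype{u}{\sigma}$ for a common $\sigma$ --- is exactly what guarantees $\tilde{u}\in{\cal V}^i_\Gamma\double{\sigma}\gamma=\mathrm{dom}(\tilde{f})$, so that bijectivity of the function $\tilde{f}$ delivers a bijective value $\tilde{f}(\tilde{u})$; this is the use of that restriction flagged in section~\ref{sec:notation}.

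The crux is the dependent function type (V13), together with the set-level dependent pair (V11). The difficulty is dependency: since $\tau$ may contain $x$, the fibers ${\cal V}^i_{\Gamma;\intype{x}{\sigma}}\double{\tau[x]}\gamma[x:=\tilde{u}]$ are distinct bijective sets for distinct $\tilde{u}$, so a functional or injective property read off from a single fiber does not suffice. The point at which injectivity becomes unavoidable is that $L(\tilde{f})$ is defined by the mappings $L(\tilde{u})\mapsto L(\tilde{f}(\tilde{u}))$: for this to be a well-defined function one needs $L(\tilde{u}_1)=L(\tilde{u}_2)\Rightarrow\tilde{u}_1=\tilde{u}_2$ on the domain $\tilde{s}_\sigma={\cal V}^i_\Gamma\double{\sigma}\gamma$, i.e. injectivity of $L$ on $\tilde{s}_\sigma$, which is supplied by the strengthened hypothesis (b) applied to the set-level type $\sigma$. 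Granting this, I would show the function space $F$ is itself bijective and $L,R$-injective: if $L(\tilde{f}_1)=L(\tilde{f}_2)$ then, evaluating at each $L(\tilde{u})$ and using injectivity of $L$ on $\tilde{s}_\sigma$, one gets $L(\tilde{f}_1(\tilde{u}))=L(\tilde{f}_2(\tilde{u}))$ for all $\tilde{u}\in\tilde{s}_\sigma$, whence injectivity of $L$ on the fiber forces $\tilde{f}_1(\tilde{u})=\tilde{f}_2(\tilde{u})$ and so $\tilde{f}_1=\tilde{f}_2$; the symmetric argument for $R$ gives injectivity of $R$, and the two together make $F$ a bijective set. The set-level dependent pair is analogous but easier: $L(\tilde{u}_1)=L(\tilde{u}_2)$ on the first component forces $\tilde{u}_1=\tilde{u}_2$ by injectivity on $\tilde{s}_\sigma$, realigning the second-component fibers so that fiber-level injectivity applies.

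The main obstacle is therefore precisely the dependent function type: it is where the naive ``members are bijective'' invariant is too weak, where the well-definedness of $L$ and $R$ on function values itself rests on injectivity of the projections on the domain, and where one must propagate injectivity of $L,R$ through the $\Pi$-construction rather than merely the bijective-set property. Carrying the injectivity clause in the induction hypothesis from the outset --- rather than attempting to recover it after the fact --- is what makes this case go through, and verifying that it is preserved by every type-forming clause is the principal technical content of the proof.
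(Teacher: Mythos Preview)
Your proposal is correct and follows essentially the same approach as the paper: induction on the recursion index, case-splitting on clauses (V1)--(V20), with the dependent function type (V13) as the crux. The only difference is presentational --- you carry injectivity of $L$ and $R$ on set-level types as an explicit strengthening of the invariant, whereas the paper treats this as implicit in ``bijective set'' and simply invokes $L_{\tilde{\tau}[\tilde{x}]}^{-1}$ to recover $f$ from $L(f)$; your clause (c) on variable interpretations is handled inline in the paper's treatment of (V10).
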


\begin{proof}
  We can prove the bijectivity lemma by induction on the recursion
  index $i$ in the clauses of section~\ref{sec:index} modified for the extended language as described above.
  For this we
  note that the lemma is trivial for $i=0$ where nothing is defined.
  We then asume that the lemma holds for $i$ and prove it for $i+1$.
  We need to consider each clause in section~\ref{sec:index} that
  defines a value ${\cal V}^{i+1}_\Gamma\double{e}\gamma$.

  Clauses (V1) and (V4) through (V9) do not define expression values.

  Clauses (V2) and (V3) define the constants $\sett$ and $\bool$. In
  this case the result follows from the fact that elements of these
  types are basic values.  We need to add a clause (V2.5) for the
  constant $\bij$ defining ${\cal V}^{i+1}_\Gamma\double{\bij}\gamma$
  but the lemma holds for $\bij$ since the elements of $\bij$ are
  defined to be bijective sets.

  Clause (V10) defines the variable value ${\cal V}^{i+1}_\Gamma\double{x}\gamma$ to be $\gamma(x)$.  In this case
  we have that $\gamma(x) \in {\cal V}^i_\Gamma\double{\tau}\gamma$
  where $\tau$ is the type expression declared for $x$ in $\Gamma$.
  By the induction hypothesis every element of ${\cal V}^i_\Gamma\double{\tau}\gamma$ is bijective.

  Clause (V11) defines the dependent pair type ${\cal V}^{i+1}_\Gamma\double{\sum_{\intype{x\;}{\;\sigma}}\;\tau[x]}$.
  By the induction hypothesis the bijectivity lemma applies to
  $\sigma$ under $\Gamma$ and to $\tau[x]$ under
  $\Gamma;\intype{x}{\sigma}$.  We then immediately have that every
  element of the dependent pair type is a pair of bijective values
  and every pair of bijective values is bijective.  If $\sigma$
  and $\tau[x]$ are set level we must also show that the pair type is
  bijective.  In this case the induction hypothesis gives us that
  ${\cal V}_\Gamma\double{\sigma}\gamma$ is bijective and for $v \in
  {\cal V}_\Gamma\double{\sigma}\gamma$ we have that ${\cal V}_{\Gamma;\intype{x\;}{\;\sigma}}\double{\tau[x]}\gamma[x:=v]$ is
  bijective. These two statements together imply that ${\cal V}_\Gamma\double{\sum_{\intype{x\;}{\;\tau}}\;\tau[x]}\gamma$ is
  bijective.

  Clause (V12) defines the subtype ${\cal V}^{i+1}_\Gamma\double{S_{\intype{x\;}{\;\sigma}}\;\Phi[x]}$.  In
  this case we have that $\sigma$ is defined at $i$ for $\Gamma$ and
  hence satisfies the induction hypothesis.  This gives that every
  element of the subtype is bijective.  If $\sigma$ is set-level we
  must also show that the subtype is bijective.  But this follows from
  the observation that any subset of a bijective set is bijective.

  Clause (V13) defines the dependent function type ${\cal V}^{i+1}_\Gamma\double{\prod_{\intype{x\;}{\;\sigma}}\;\tau[x]}$.
  Let $\tilde{\sigma}$ denote ${\cal V}^i_\Gamma\double{\sigma}\rho$.
  By the induction hypothesis for $\sigma$ we get that
  $\tilde{\sigma}$ is bijective and hence every function in the
  dependent function type is bijective.  To show that the function
  type is bijective consider a function $f$ in the bijective type and
  consider $\tilde{x} \in \tilde{\sigma}$.  Let $L_{\tilde{\sigma}}$
  be the left operation restricted to $\tilde{\sigma}$.  Let
  $\tilde{\tau}[\tilde{x}]$ denote ${\cal V}^i_{\Gamma\;\intype{x\;}{\;\sigma}}\double{\tau[x]}\gamma[x:=\tilde{x}]$
  and let $L_{\tilde{\tau}[\tilde{x}]}$ be the left operation restricted to $\tilde{\tau}[\tilde{x}]$.
  By the definition of the left operation on functions we have $L(f)(L(\tilde{x}))) = L(f(\tilde{x}))$.  By the definition of the dependent function type we have
  $f(\tilde{x}) \in \tilde{\tau}[\tilde{x}]$ which gives $L(f(\tilde{x})) \in L_{\tilde{\tau}[\tilde{x}]}$.  By the induction hypothesis we have $L_{\tilde{\tau}[\tilde{x}]}$ is bijective
  and we get $f(\tilde{x}) = L_{\tilde{\tau}[\tilde{x}]}^{-1}(L(f)(L_{\tilde{\sigma}}(\tilde{x})))$. So
  the function $f$ can be recovered from its projection $L(f)$ and the function type is bijective.

  Clauses (V14), (V16) and (V17) define formulas --- expressions whose value is Boolean.  Boolean values are basic and hence bijective.

  Clause (V15) defines ${\cal V}^{i+1}_\Gamma\double{f(e)}\gamma$.
  We are given $\Gamma \models^i \intype{f}{\left(\prod_{\intype{x\;}{\;\sigma}}\;\tau[x]\right)}$
  and $\Gamma \models^i \intype{e}{\sigma}$.  This implies that the value of $f(e)$ is in ${\cal V}^i\double{\tau[x]}\gamma[x:=\tilde{e}]$
  where $\tilde{e}$ is the value of $e$.  By the induction hypoothesis for $\tau[x]$ we get that $f(e)$ is bijective.

  Clauses (V18) and (V19) define pairing and the $\pi_1$ and $\pi_2$ projection functions on pairs.
  In each case the bijectivity of the value of the defined expression follows
  immediately from the induction hypothesis.

  Clause (V20) defines ${\cal V}^{i+1}_\Gamma\double{\lambda\;\intype{x}{\sigma}\;e[x]}\gamma$.
  By the induction hypothesis we have that the value of $\sigma$ is bijective
  and hence the value of the lambda expression is bijective.
\end{proof}

\subsection{The Commutation Lemma}
\label{sec:commutation}

\begin{lemma}[Commutation Lemma]
  \label{lem:commute} ~
  The following hold in the extended language.
  \begin{itemize}
  \item[(a)] For $\gamma \in {\cal V}\double{\Gamma}$ we have $L(\gamma) \in {\cal V}\double{\Gamma}$.

  \item[(b)] If $\Gamma \models \inntype{\sigma}{\type}$ then for $\gamma \in {\cal V}\double{\Gamma}$ and $\tilde{u} \in {\cal V}_\Gamma\double{\sigma}\gamma$
    we have $L(\tilde{u}) \in {\cal V}_\Gamma\double{\sigma}L(\gamma)$.
        
  \item[(c)] If $u$ is a set-level expression defined under $\Gamma$ then for $\gamma \in {\cal V}\double{\Gamma}$ we have
    $L({\cal V}_\Gamma\double{u}\gamma) = {\cal V}_\Gamma\double{u}L(\gamma)$
  \end{itemize}
  And (a), (b) and (c) hold similarly for the right projection $R$.
\end{lemma}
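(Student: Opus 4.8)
The plan is to prove the indexed versions of (a), (b) and (c) --- with every ${\cal V}\double{\cdot}$ replaced by ${\cal V}^i\double{\cdot}$ --- simultaneously by induction on the recursion index $i$ of Section~\ref{sec:index}, where $L(\gamma)$ denotes the interpretation sending each declared variable $x$ to $L(\gamma(x))$. Since ${\cal V}\double{\cdot}$ agrees with ${\cal V}^i\double{\cdot}$ at every $i$ where the latter is defined, the indexed statements immediately yield the unindexed ones. The base case $i=0$ is vacuous because nothing is defined there. For the inductive step I assume (a), (b) and (c) at index $i$ and establish them at $i+1$ by walking through the clauses (V1)--(V20), as modified for the extended language, that build ${\cal V}^{i+1}$. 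The three parts must be proved together because the function cases of (b) and (c) depend on (c) applied at index $i$ to the domain expression, while the context clauses of (a) depend on (b) and (c). Throughout, the right projection $R$ is handled by the identical argument, using that the notion of bijective set is symmetric in $L$ and $R$.

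For part (a) only the context-forming clauses (V1), (V8) and (V9) are relevant. The empty context (V1) is immediate since $L$ fixes the empty interpretation. For $\Gamma;\intype{x}{\sigma}$ (V8) I write a typical $\gamma \in {\cal V}^{i+1}\double{\Gamma;\intype{x}{\sigma}}$ as $\gamma'[x:=\tilde u]$ with $\gamma' \in {\cal V}^i\double{\Gamma}$ and $\tilde u \in {\cal V}^i_\Gamma\double{\sigma}\gamma'$; then $L(\gamma) = L(\gamma')[x:=L(\tilde u)]$, and applying (a) at $i$ to $\gamma'$ and (b) at $i$ to $\tilde u$ places $L(\gamma)$ back in the context. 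For $\Gamma;\Phi$ (V9) I use (c) at $i$ for the Boolean $\Phi$ together with $L(\true)=\true$ to see that truth of $\Phi$ is preserved under $L(\gamma)$. For part (b) the type clauses are (V2), (V2.5), (V3), (V11), (V12) and (V13). The constants $\sett$, $\bij$, $\bool$ are handled by observing that their members are basic values or bijective sets, both stable under $L$; for $\sum$ (V11) and $S$ (V12) I project componentwise and invoke (b) (and, for the subtype predicate, (c)) at index $i$ on the constituents, using $L(\gamma[x:=\tilde u]) = L(\gamma)[x:=L(\tilde u)]$. The dependent function type (V13) is the first delicate case: I must show $L(\tilde f)$ lands in the projected function type. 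Its domain is $\{L(\tilde a): \tilde a \in {\cal V}^i\double{\sigma}\gamma\}$, which equals ${\cal V}^i\double{\sigma}L(\gamma)$ by part (c) at index $i$ applied to the set-level expression $\sigma$; its values lie in the correct fibers by (b) at index $i$ for $\tau[x]$; and it is single-valued because, by the Bijectivity Lemma, ${\cal V}^i\double{\sigma}\gamma$ is a bijective set on which $L$ is injective.

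Part (c) is proved clause by clause for the value-defining clauses (V10), (V14)--(V20). Variables (V10), pairing (V18) and projection (V19) commute with $L$ directly from the definition of $L$ on pairs, using (c) at $i$ on subexpressions. The pivotal cases are equality (V16) and application/lambda (V15), (V20). For $u=v$ (V16) the value under $\gamma$ is $\true$ iff $\tilde u = \tilde v$, while under $L(\gamma)$ it is $\true$ iff $L(\tilde u) = L(\tilde v)$ by (c) at $i$; these agree precisely because the well-formedness constraint forces $\tilde u, \tilde v$ into a common set-level value ${\cal V}^i\double{\sigma}\gamma$, which the Bijectivity Lemma makes a bijective set on which $L$ is injective. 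This is exactly the constraint whose necessity was illustrated by the $x=\pi_2(S)$ example in Section~\ref{sec:notation}. For $f(u)$ (V15) I use (c) at $i$ to rewrite the value under $L(\gamma)$ as $L(\tilde f)(L(\tilde u))$, and then the well-formedness condition that $\tilde u$ lies in the domain of $\tilde f$, together with the definition of $L$ on functions, gives $L(\tilde f)(L(\tilde u)) = L(\tilde f(\tilde u))$. The lambda clause (V20) is similar: the domains match by (c) at $i$ for $\sigma$, and the two functions agree pointwise by (c) at $i$ for the body $e[x]$. The Boolean clauses (V14), (V17) reduce to the observation that truth values are basic and hence fixed by $L$, together with (c) at $i$ on the immediate subformulas.

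I expect the main obstacle to be the two clauses that turn on injectivity of $L$: the equality clause (V16) and the application clause (V15). Both require that the relevant set --- the type value ${\cal V}^i\double{\sigma}\gamma$ in the equality case, the function domain in the application case --- be a bijective set, so that $L$ restricts to a bijection there; this is supplied by the Bijectivity Lemma and, at the language level, by the well-formedness constraints that confine equality and the argument of an application to set-level types. The remaining subtlety is bookkeeping: because identifying the domain of a projected function with ${\cal V}^i\double{\sigma}L(\gamma)$ needs part (c) for set-level type expressions, parts (a), (b) and (c) cannot be separated and must be carried through the induction together.
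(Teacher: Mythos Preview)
Your proposal is correct and follows essentially the same approach as the paper: simultaneous induction on the recursion index, a clause-by-clause walk through (V1)--(V20), with the Bijectivity Lemma supplying injectivity of $L$ exactly where you say it is needed. One small bookkeeping omission: you list only (V10) and (V14)--(V20) under part (c), but part (c) must also be established at index $i+1$ for the set-level instances of (V11), (V12) and (V13), since you yourself invoke part (c) at index $i$ on the set-level domain expression $\sigma$ when handling (V13); the paper treats these three cases explicitly (each follows directly from the induction hypothesis on the constituents), and without them the induction hypothesis you appeal to would not be available.
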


\begin{proof}
The proof is by induction on the recursion index.
We must consider each clause that defines either a value for ${\cal V}^{i+1}\double{\Gamma}$
or a value for ${\cal V}^{i+1}_\Gamma\double{e}\gamma$.

All three parts of the commutation lemma are immediate for clauses (V1), (V2) and (V3) defining the constants $\epsilon$, $\sett$ and $\bool$ respectively.
We only note that for (V2) and (V3) and $\gamma \in {\cal V}^i\double{\Gamma}$ the induction hypothesis for $i$ gives $L(\gamma) \in {\cal V}^i\double{\Gamma}$.

Clauses (V4) through (V7) define notation in terms of ${\cal V}^i\double{\Gamma}$ and ${\cal V}^i_\Gamma\double{e}\gamma$ without defining new values.

Clauses (V8) and (V9) define the meaning of nonempty contexts and we must verify part (a) of the lemma.

For (V8) defining ${\cal V}^{i+1}\double{\Gamma; \;\intype{x}{\sigma}}$
consider $\gamma \in {\cal V}^i\double{\Gamma}$ and $\tilde{u} \in {\cal V}^i_\Gamma\double{\sigma}\gamma$.  By part (a) of the induction hypothesis we have $L(\gamma) \in {\cal V}^i\double{\Gamma}$.
By part (b) of the induction hypothesis we have $L(\tilde{u}) \in {\cal V}^i_\Gamma\double{\sigma}L(\gamma)$.  These together give $L(\gamma[x:=\tilde{u}]) \in {\cal V}^{i+1}\double{\Gamma;\;\intype{x}{\sigma}}$.

For (V9) defining ${\cal V}^{i+1}\double{\Gamma;\Phi}$ consider $\gamma \in {\cal V}^i\double{\Gamma}$ such that ${\cal V}^i_\Gamma\double{\Phi}\gamma$ is true.
By part (a) of the induction hypothesis we have $L(\gamma) \in {\cal V}^i\double{\Gamma}$.  By part (c) of the induction hypothesis we have
${\cal V}^i_\Gamma\double{\Phi}\gamma = L({\cal V}^i_\Gamma\double{\Phi}\gamma) = {\cal V}^i_\Gamma\double{\Phi}L(\gamma) = \true$.  This gives $L(\gamma) \in {\cal V}^{i+1}\double{\Gamma;\Phi}$ as desired.

Clauses (V10) through (V20) define ${\cal V}^{i+1}_\Gamma\double{e}\gamma$. In each of these clauses we are assuming $\gamma \in {\cal V}^i\double{\Gamma}$ and by
part (a) of the induction hypothesis we have $L(\gamma) \in {\cal V}^i\double{\Gamma}$.  For each clause we need to show parts (b) and (c).

For (V10) defining the value of a variable condition (b) applies when a set variable $s$ is declared in $\Gamma$ by either $\intype{s}{\sett}$ or $\intype{s}{\bij}$.
In this case condition (b) reduces to $L(\tilde{u}) \in L(\gamma(s))$ for $\tilde{u} \in \gamma(s)$ which follows from the definition of the left operation.  Part (c) is also immediate for this case.

For (V11) defining dependent pair types we first consider part (b). Consider $\tuple{\tilde{u},\tilde{v}} \in {\cal V}^{i+1}_\Gamma\double{\sum_{\intype{x\;}{\;s}}\;\tau[x]}\gamma$.
We must show
$$\tuple{L(\tilde{u}),L(\tilde{v})} \in {\cal V}^{i+1}_\Gamma\double{\sum_{\intype{x\;}{\;\sigma}}\;\tau[x]}L(\gamma).$$
In this case we have $\tilde{u} \in {\cal V}^i_\Gamma\double{\sigma}\gamma$ and $\tilde{v} \in {\cal V}^i_{\Gamma;\;\intype{x}{\;\sigma}}\double{\tau[x]}\gamma[x:=\tilde{u}]$.
The induction hypothesis gives $L(\tilde{u}) \in {\cal V}^i_\Gamma\double{\sigma}L(\gamma)$ and $L(\tilde{v}) \in {\cal V}^i_{\Gamma;\;\intype{x}{\;\sigma}}\double{\tau[x]}L(\gamma)[x:=L(\tilde{u})]$
which proves the result.

Part (c) applies when $\sigma$ and $\tau[x]$ are both set expressions.  In this case we must show
$$L\left({\cal V}^{i+1}_\Gamma\double{\sum_{\intype{x\;}{\;\sigma}}\;\tau[x]}\right)\gamma = {\cal V}^{i+1}_\Gamma\double{\sum_{\intype{x\;}{\;\sigma}}\;\tau[x]}L(\gamma).$$
This follows straightforwardly from the induction hypothesis for part (c) applied to set expressions $\sigma$ and $\tau[x]$.

For (V12) defining subtypes we again first consider part (b). Consider $\tilde{x} \in {\cal V}^{i+1}_\Gamma\double{S_{\intype{x\;}{\;\sigma}}\;\Phi[x]}\gamma$.
We must show
$$L(\tilde{x}) \in {\cal V}^{i+1}_\Gamma\double{S_{\intype{x\;}{\;\sigma}}\;\Phi[x]}L(\gamma).$$
We are given $\tilde{x} \in {\cal V}^i_\Gamma\double{\sigma}\gamma$ and ${\cal V}^i_{\Gamma;\;\intype{x\;}{\;\sigma}}\double{\Phi[x]}\gamma[x:=\tilde{x}] = \true$.
By the induction hypothesis for part (b) we have $L(\tilde{x}) \in {\cal V}^i_\Gamma\double{\sigma}L(\gamma)$ and
by part (c) of the induction hypothesis we have
$$L({\cal V}^i_{\Gamma;\;\intype{x\;}{\;\sigma}}\double{\Phi[x]}L(\gamma)[x:=\tilde{x}]) = {\cal V}^i_{\Gamma;\;\intype{x\;}{\;\sigma}}\double{\Phi[x]}L(\gamma)[x:=L(\tilde{x})] = \true$$
which proves the result.

Part (c) applies when $\sigma$ is a set expression in which case we must show
$$L\left({\cal V}^{i+1}_\Gamma\double{S_{\intype{x\;}{\;\sigma}}\;\Phi[x]}\gamma\right) = {\cal V}^{i+1}_\Gamma\double{S_{\intype{x\;}{\;\sigma}}\;\Phi[x]}L(\gamma)$$
Again this follows straightforwardly from part (c) of the induction hypothesis applied to $\sigma$ and $\Phi[x]$.

For (V13) defining dependent function types we first consider part (b).  For $f \in {\cal V}^{i+1}_\Gamma\double{\prod_{\intype{x\;}{\;\sigma}}\;\tau[x]}\gamma$
we must show
$L(f) \in {\cal V}^{i+1}_\Gamma\double{\prod_{\intype\;{x}{\;\sigma}}\;\tau[x]}L(\gamma)$.
Let $\tilde{\sigma}$ be ${\cal V}^i_\Gamma\double{\sigma}\gamma$.
By the definition of the left operation
the domain of the function $L(f)$ is $L(\tilde{\sigma})$.
By the bijectivity lemma we have that the left operation is a bijection on $\tilde{\sigma}$.  Therefore every element of
$L(\tilde{\sigma})$ can be written as $L(\tilde{x})$ for $\tilde{x} \in \tilde{\sigma}$. So it now suffices
to show that
$$L(f)(L(\tilde{x})) \in {\cal V}^i_{\Gamma;\;\intype{x\;}{\;\sigma}}\double{\tau[x]}L(\gamma)[x:=L(\tilde{x})].$$
By the definition of the left operation on functions we have that $L(f)(L(\tilde{x}))$ equals $L(f(\tilde{x}))$.
Now let $\tilde{\tau}[\tilde{x}]$ denote
${\cal V}^i_{\Gamma;\;\intype{x\;}{\;\sigma}}\double{\tau[x]}\gamma[x:=\tilde{x}]$.
By the definition of dependent function types we have $f(\tilde{x}) \in \tilde{\tau}[\tilde{x}]$.  Part (c) of the induction
hypothesis gives
$L(f(\tilde{x})) \in L(\tilde{\tau}[\tilde{x}]) = {\cal V}^i_{\Gamma;\intype{x\;}{\;\sigma}}\double{\tau[x]}L(\gamma)[x:=L(\tilde{x})]$
which proves part (b).

For dependent function types we have that $\sigma$ and $\tau[x]$ are set expressions and we must show
$$L\left({\cal V}^{i+1}_\Gamma\double{\prod_{\intype{x\;}{\;\sigma}}\;\tau[x]}\gamma\right) = {\cal V}^{i+1}_\Gamma\double{\prod_{\intype{x\;}{\;\sigma}}\;\tau[x]}L(\gamma)$$
Again this follows from the induction hypothesis applied to $\sigma$ and $\tau[x]$.

For the remaining clauses (V14) through (V20) we need only consider part (c).  Clauses (V14) for boolean connectives,  (V15) for function application, (V18) for pairing, and (V19) for $\pi_i$ are
simple compositional definitions and the result follows immediately from the induction hypothesis.  Clauses (V20) is for lambda expressions $\lambda \intype{x}{\sigma}\;e[x]$.
All lambda expressions must be set level and in this case the proof is similarly straightforward by applying the induction hypothesis to $\sigma$ and $e[x]$.
We explicitly consider the two remaining clauses (V16) and (V17).

For (V16) defining equality we must show
$${\cal V}^{i+1}_\Gamma\double{u=v}\gamma = {\cal V}^{i+1}_\Gamma\double{u=v}L(\gamma).$$
In this case we are given that there exists a set expression $\sigma$ such that $\Gamma \models^i \intype{u}{\sigma}$
and $\Gamma \models^i \intype{v}{\sigma}$.  Let $\tilde{u}$, $\tilde{v}$ and $\tilde{\sigma}$
denote ${\cal V}_\gamma\double{u}\gamma$, ${\cal V}_\gamma\double{v}\gamma$ and ${\cal V}_\gamma\double{\sigma}\gamma$ respectively.
By the bijectivity lemma we have that $\tilde{\sigma}$ is bijective.
So we have $\tilde{u} = \tilde{v}$ if and only if
$L(\tilde{u}) = L(\tilde{v})$.  By the induction hypothesis we have $L(\tilde{u}) = {\cal V}^i_\Gamma\double{u}L(\gamma)$ and $L(\tilde{v}) = {\cal V}^i_\Gamma\double{v}L(\gamma)$
which proves the result.

Clause (V17) handles quantified formulas $\forall\;\intype{x}{\sigma}\;\Phi[x]$. Here we have allowed $\sigma$ to be a class expression --- class expressions
within Boolean formulas are allowed. For this case we have
$${\cal V}^{i+1}_\Gamma\double{\forall\;\intype{x}{\sigma}\;\Phi[x]}\gamma = \true \;\;\;\mbox{iff}\;\;\; \Gamma;\intype{x}{\sigma} \models^i \Phi[x].$$
Applying the induction hypothesis to the expression $\Phi[x]$ gives that for $\gamma \in {\cal V}^i\double{\Gamma;\intype{x}{\sigma}}$ we have
${\cal V}^i_{\Gamma;\intype{x\;}{\;\sigma}}\double{\Phi[x]}\gamma = {\cal V}^i_{\Gamma;\intype{x\;}{\;\sigma}}\double{\Phi[x]}L(\gamma)$
from which the result follows.
\end{proof}

\subsection{The Functor Lemma}
\label{sec:functor}

\newcommand{\pack}{\mathrm{pack}}
\newcommand{\unpack}{\mathrm{unpack}}

\begin{definition}
  For a class expression $\sigma$ we define $\sigma^*$ to be the result replacing
  each occurrence of $\sett$ in $\sigma$ with $\bij$.
\end{definition}

\begin{lemma}[Functor Lemma]
  \label{lem:functor}
  If $\Gamma;\intype{x}{\sigma} \models \intype{e[x]}{\tau}$ in the base langauge
  then $\Gamma;\intype{x}{\sigma^*} \models \intype{e[x]}{\tau^*}$ in the extended langauge.
\end{lemma}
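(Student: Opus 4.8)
The plan is to prove the lemma by induction on the recursion index $i$ of Section~\ref{sec:index}, carried out simultaneously for all the judgment forms. The translation I would track keeps the context $\Gamma$ and every expression literal, starring only the type in each judgment: from $\Gamma \models^i \intype{e}{\tau}$ I aim to derive $\Gamma \models \intype{e}{\tau^*}$ in the extended language, from $\Gamma \models^i \inntype{\tau}{\type}$ I aim to derive $\Gamma \models \inntype{\tau^*}{\type}$, and so on, with the distinguished declaration $\intype{x}{\sigma}$ replaced by $\intype{x}{\sigma^*}$. By the Conservative Extension Lemma the meaning of the unstarred $\Gamma$ is unchanged, so I may work with a fixed basic $\gamma_0 \in {\cal V}\double{\Gamma}$ throughout. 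The bulk of the induction is the verification that each clause (V1)--(V20) is preserved. Wherever the base derivation discharges a side condition $\Gamma' \models \intype{\rho}{\sett}$ (as in (V13), (V16) and (V20)), the corresponding extended side condition is $\Gamma' \models \intype{\rho^*}{\bij}$; since $\rho^*$ is again a set-level expression, the Bijectivity Lemma guarantees that its value is a bijective set, so $\intype{\rho^*}{\bij}$ holds exactly where $\intype{\rho}{\sett}$ held.

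The one piece of genuine semantic content is the membership claim ${\cal V}\double{e[x]}\gamma \in {\cal V}\double{\tau^*}\gamma$, and here I would invoke the Commutation Lemma, following the slogan that a definable term carries isomorphisms. Fix a bijective $\tilde u \in {\cal V}\double{\sigma^*}\gamma_0$ and put $\gamma = \gamma_0[x:=\tilde u]$. Because $\gamma_0$ is basic we have $L(\gamma) = \gamma_0[x:=L(\tilde u)]$ and $R(\gamma) = \gamma_0[x:=R(\tilde u)]$, which assign only basic values; a side induction on the structure of $\sigma$ establishes the projection property $L({\cal V}\double{\sigma^*}\gamma_0) = {\cal V}\double{\sigma}\gamma_0$, so that $L(\tilde u)$ and $R(\tilde u)$ lie in ${\cal V}\double{\sigma}\gamma_0$. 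The base hypothesis, transported by the Conservative Extension Lemma and using that $x$ is not free in $\tau$, then places ${\cal V}\double{e[x]}L(\gamma)$ and ${\cal V}\double{e[x]}R(\gamma)$ in ${\cal V}\double{\tau}\gamma_0$. Commutation part~(c) identifies these with $L({\cal V}\double{e[x]}\gamma)$ and $R({\cal V}\double{e[x]}\gamma)$, while the Bijectivity Lemma makes ${\cal V}\double{e[x]}\gamma$ itself bijective; that a bijective value of the shape dictated by $\tau$ with both projections inside ${\cal V}\double{\tau}\gamma_0$ is a member of ${\cal V}\double{\tau^*}\gamma_0$ is the dual of the projection property and is tracked by the same clause-by-clause induction that produced the well-formedness of $\tau^*$.

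The step I expect to be the main obstacle is the reconciliation of keeping $e[x]$ literal with starring its types. Since $e[x]$ is set-level, every occurrence of $\sett$ in it sits inside a formula and is \emph{not} touched by the star operation, yet the inner lambda domains, subtype domains and quantifier ranges must still meet the relaxed $\bij$ conditions and, for the membership argument, must denote the same sets as their starred counterparts. The key enabling observation is that in a well-formed formula the quantifiers $\forall\;\intype{t}{\sett}$ and $\forall\;\intype{t}{\bij}$ are interchangeable: for any bijective $t$ the Commutation Lemma forces the truth value of the body at $t$ to agree with its value at the basic projection $L(t)$, and since every basic set is bijective while every bijective witness projects to a basic one (recall $\sett \subsetneq \bij$), the two quantifier ranges yield identical truth values. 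Establishing this interchange uniformly across all clauses --- so that literal set-level subexpressions are simultaneously well-formed in the extended language and denotationally equal to their starred forms --- together with the bookkeeping needed to keep the mutually recursive inductions on the recursion index well-founded, is where the real work of the proof lies.
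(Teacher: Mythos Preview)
Your approach diverges from the paper's in a way that leaves a genuine gap. The paper does \emph{not} prove the Functor Lemma via the Commutation Lemma; it introduces a new pair of operations $\pack$/$\unpack$ and a separate Pack Lemma. The crucial property of $\pack$ is that it is an \emph{injection} from bijective values into basic values (iso-pairs are wrapped as atoms), with $\unpack\circ\pack = \mathrm{id}$. Part~(c) of the Pack Lemma then gives the exact equivalence $\pack(\tilde u)\in{\cal V}_\Gamma\double{\tau}\pack(\gamma)\;\Leftrightarrow\;\tilde u\in{\cal V}_\Gamma\double{\tau^*}\gamma$, and the Functor Lemma follows in four lines by transporting the base-language hypothesis along $\pack$.

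Your substitute for this --- the ``dual projection property'' that a bijective value whose $L$- and $R$-projections both lie in ${\cal V}\double{\tau}\gamma_0$ must lie in ${\cal V}\double{\tau^*}\gamma_0$ --- is false, because $L$ and $R$ jointly are not injective on bijective values. Take $\tau=\sum_{\intype{s\;}{\;\sett}}\,s$, so $\tau^*=\sum_{\intype{s\;}{\;\bij}}\,s$. Let $\tilde s$ be the bijective set consisting of the two iso-pairs $(1,2)$ and $(2,1)$ (the swap on $\{1,2\}$), and let $\tilde a$ be the iso-pair $(1,1)$. Then $\tuple{\tilde s,\tilde a}$ is a bijective pair with $L(\tuple{\tilde s,\tilde a})=R(\tuple{\tilde s,\tilde a})=\tuple{\{1,2\},1}\in{\cal V}\double{\tau}$, yet $\tilde a\notin\tilde s$, so $\tuple{\tilde s,\tilde a}\notin{\cal V}\double{\tau^*}$. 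The hedge ``of the shape dictated by $\tau$'' does not help: this value has exactly the pair shape that $\tau$ dictates. Your induction on the recursion index therefore cannot close with the ingredients you list; what is missing is precisely a lossless encoding of bijective values as basic ones, which is what $\pack$ supplies. (A smaller point: you also assume $x$ is not free in $\tau$, but the Functor Lemma as stated carries no such restriction; that hypothesis only enters later, in the Isomorphism Congruence theorem.)
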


The functor lemma can be interpreted as saying that the function mapping
$x \in \sigma$ to $e[x] \in \tau$, in addition to mapping objects in $\sigma$ to objects in $\tau$,
map morphisms in $\sigma^*$ to morphisms in $\tau^*$.

To prove the functor lemma
we introduce two new functions $\pack$ and $\unpack$ on bijective values defined as follows.

\begin{definition}[$\pack$ and $\unpack$]
  ~
  \begin{itemize}
  \item $\pack(\tuple{\mbox{\tt "isopair"},\tuple{u,w}}) = \tuple{\mbox{\tt "atom"},\tuple{\mbox{\tt "isopair"},\tuple{u,w}}}$
  \item $\pack(\tuple{\mbox{\tt "atom"},x}) = \tuple{\mbox{\tt "atom"},\tuple{\mbox{\tt "atom"},x}}$
  \item $\pack(\tuple{\mbox{\tt"Bool"},x}) = \tuple{\mbox{\tt "Bool"},x}$.
  \item $\pack(\tuple{\mbox{\tt "pair"},\tuple{u,w}}) = \tuple{\mbox{\tt "pair"}, \tuple{\pack(u),\pack(w)}}$
  \item $\pack(\tuple{\mbox{\tt "set"},s}) = \tuple{\mbox{\tt "set"},\{\pack(x):\;x\in s\}}$
  \item $\pack(\tuple{\mbox{\tt "function"},f}) = \tuple{\mbox{\tt "function"},g}$ where $g$ is the
    function containing the mappings $\pack(u) \mapsto \pack(f(u))$ for $u$ in the domain of $f$.
  \end{itemize}

  \begin{itemize}
  \item $\unpack(\tuple{\mbox{\tt "atom"},x}) = x$
  \item $\unpack(\tuple{\mbox{\tt"Bool"},x}) = \tuple{\mbox{\tt "Bool"},x}$.
  \item $\unpack(\tuple{\mbox{\tt "pair"},\tuple{u,w}}) = \tuple{\mbox{\tt "pair"}, \tuple{\unpack(u),\unpack(w)}}$
  \item $\unpack(\tuple{\mbox{\tt "set"},s}) = \tuple{\mbox{\tt "set"},\{\unpack(x):\;x\in s\}}$
  \item $\unpack(\tuple{\mbox{\tt "function"},f}) = \tuple{\mbox{\tt "function"},g}$ where $g$ is the
    function containing the mappings $\unpack(u) \mapsto \unpack(f(u))$ for $u$ in the domain of $f$.
  \end{itemize}
\end{definition}

\begin{lemma}
  For any extended value $x$ we have $\unpack(\pack(x)) = x$.
\end{lemma}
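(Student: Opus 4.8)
The plan is to prove $\unpack(\pack(x)) = x$ by well-founded induction on the set-theoretic rank of the extended value $x$, following exactly the idiom of Definition~\ref{def:basic} (induction on set-theoretic containment). The recursion defining both $\pack$ and $\unpack$ descends only into the two components of a pair, the elements of a set, and the domain and range values of a function, and each of these has strictly smaller rank than $x$ itself. Hence rank is an appropriate induction measure, with atoms, truth values, and iso-pairs serving as the leaves that bottom out the recursion.

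For the base cases I would simply compute. For a truth value $\tuple{\mbox{\tt "Bool"},x}$ both functions act as the identity on the outer structure, so the claim is immediate. For an atom we have $\pack(\tuple{\mbox{\tt "atom"},x}) = \tuple{\mbox{\tt "atom"},\tuple{\mbox{\tt "atom"},x}}$, and then $\unpack$ strips the outer atom tag, returning $\tuple{\mbox{\tt "atom"},x}$. For an iso-pair we have $\pack(\tuple{\mbox{\tt "isopair"},\tuple{u,w}}) = \tuple{\mbox{\tt "atom"},\tuple{\mbox{\tt "isopair"},\tuple{u,w}}}$, and $\unpack$ again strips the outer atom, recovering the iso-pair. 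The sole purpose of the extra atom wrapper introduced by $\pack$ is precisely to give $\unpack$'s atom clause something to undo. The inductive cases for pairs and sets are then routine: for a pair one gets $\tuple{\mbox{\tt "pair"},\tuple{\unpack(\pack(u)),\unpack(\pack(w))}}$, which the induction hypothesis collapses to $\tuple{\mbox{\tt "pair"},\tuple{u,w}}$; for a set one gets $\{\unpack(\pack(x)) : x \in s\} = \{x : x \in s\} = s$, where set extensionality absorbs any coincidences among elements.

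The step I expect to be the main obstacle is the function case, where there is a well-definedness issue hidden inside the definition of $\pack$. For $\tuple{\mbox{\tt "function"},f}$, $\pack$ produces $\tuple{\mbox{\tt "function"},g}$ where $g$ is the set of mappings $\pack(u) \mapsto \pack(f(u))$; for $g$ to be a genuine function we need $\pack$ to be injective on the domain of $f$. This injectivity is not assumed separately but follows from the induction hypothesis: every $u$ in the domain of $f$ has smaller rank than the function, so $\unpack(\pack(u)) = u$ by induction, and any map possessing a left inverse on a set is injective on that set. Hence $g$ is well-defined. I would then compute $\unpack(\tuple{\mbox{\tt "function"},g})$, which is the function sending $\unpack(\pack(u)) \mapsto \unpack(\pack(f(u)))$; applying the induction hypothesis to both $u$ and $f(u)$, which are again of smaller rank, collapses this to the map $u \mapsto f(u)$. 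The reconstructed function therefore has exactly the domain and the graph of $f$, giving $\unpack(\pack(\tuple{\mbox{\tt "function"},f})) = \tuple{\mbox{\tt "function"},f}$ and completing the induction.
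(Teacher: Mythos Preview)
The paper states this lemma without proof, so there is no argument in the paper to compare against. Your induction on set-theoretic rank is correct and is the natural way to establish the claim; in particular, your observation that the well-definedness of $\pack$ on functions (and of $\unpack$ on the resulting $g$) follows from the induction hypothesis via the left-inverse property is the only point requiring care, and you handle it correctly.
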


\begin{lemma}[Pack Lemma]
  \label{lem:pack}
  The pack operation satisfies the following conditions in the extended langauage
  where we define $\pack(\gamma)$ by $\pack(\gamma)(x) = \pack(\gamma(x))$.

  \begin{itemize}
  \item[(a)] For $\gamma \in {\cal V}\double{\Gamma}$ we have $\pack(\gamma) \in {\cal V}\double{\Gamma}$.
    
  \item[(b)] For any set-level expression $e$ we have that ${\cal V}_\Gamma\double{e}\gamma$ is defined if and only if ${\cal V}_\Gamma\double{e}\pack(\gamma)$ is defined and
  $\pack({\cal V}_\Gamma\double{e}\gamma) = {\cal V}_\Gamma\double{e}\pack(\gamma)$.

  \item[(c)] For ${\cal V}_\Gamma\double{\tau}\gamma$ defined we have that ${\cal V}_\Gamma\double{\tau^*}\gamma$ is also defined
    and for any bijective value $\tilde{u}$ we have $\pack(\tilde{u}) \in {\cal V}_\Gamma\double{\tau}\pack(\gamma)$ if and only if
    $\tilde{u} \in {\cal V}_\Gamma\double{\tau^*}\gamma$.
  \end{itemize}
\end{lemma}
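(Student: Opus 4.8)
The plan is to prove parts (a), (b) and (c) simultaneously by induction on the recursion index $i$ of Section~\ref{sec:index}, exactly paralleling the proofs of the Bijectivity and Commutation Lemmas: the statements are trivial at $i=0$ (nothing is defined), and one verifies them for ${\cal V}^{i+1}$ assuming them for ${\cal V}^{i}$ by inspecting each clause (V1)--(V20) together with the new clause (V2.5) for $\bij$. Part (b) is handled by the clauses defining set-level expression values, part (c) by the clauses defining type values, and part (a) by the context clauses (V8) and (V9); because types occur inside set-level expressions (in $\lambda$ and $\forall$) and set-level expressions occur inside types (the predicate of an $S$-type, the components of $\sum$ and $\prod$), the three parts must be carried together. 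Throughout I would use two elementary facts about $\pack$: it is injective, since $\unpack(\pack(x))=x$, and it commutes structurally with pairing, set-formation and function-formation by its very definition. Injectivity is what makes the equality clause (V16) go through, since ${\cal V}_\Gamma\double{u=v}\gamma$ compares $\tilde u,\tilde v$ while ${\cal V}_\Gamma\double{u=v}\pack(\gamma)$ compares $\pack(\tilde u),\pack(\tilde v)$.

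The conceptual core is the single clause for $\sett$, where $\sett^{*}=\bij$. Here one observes that for a bijective value $\tilde u$ the packed value $\pack(\tilde u)$ is a \emph{basic} set if and only if $\tilde u$ is a \emph{bijective} set: $\pack$ replaces every iso-pair by an opaque atom, so it carries a bijective set to a basic set, and conversely a bijective value tagged as a set is already a bijective set. Thus $\pack(\tilde u)\in{\cal V}_\Gamma\double{\sett}\pack(\gamma)$ holds exactly when $\tilde u\in{\cal V}_\Gamma\double{\bij}\gamma$, which is (c) for $\tau=\sett$. The clauses for $\bool$ (with $\bool^{*}=\bool$) and for $\bij$ (with $\bij^{*}=\bij$) are immediate, and the clause for a declared set variable reduces, via injectivity of $\pack$ on the tagged set $\gamma(s)$, to $\tilde u\in\gamma(s)\Leftrightarrow\pack(\tilde u)\in\pack(\gamma(s))$. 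The remaining set-level constructors (pairing, $\pi_i$, application, the Boolean connectives and $\lambda$) are settled by the structural commutation of $\pack$ and the induction hypothesis; the quantifier $\forall\,\intype{x}{\sigma}\,\Phi$ is settled because by clause (V17) its value does not depend on the interpretation but only on whether $\Gamma;\intype{x}{\sigma}\models^{i}\Phi$, so once part (a) gives $\pack(\gamma)\in{\cal V}\double{\Gamma}$ the two values coincide. Part (a) itself follows from (c) applied to the type declared for the new variable, using the routine monotonicity ${\cal V}_\Gamma\double{\sigma}\gamma\subseteq{\cal V}_\Gamma\double{\sigma^{*}}\gamma$ coming from $\sett\subsetneq\bij$, so that a member of $\sigma$ is a member of $\sigma^{*}$ and (c) transports it across $\pack$.

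The dependent constructors $\sum$, $S$ and $\prod$ are where the real work lies. For $\tau=\sum_{\intype{x}{\sigma}}\rho[x]$ and a bijective pair $\tilde u=\tuple{\tilde u_1,\tilde u_2}$ one has $\pack(\tilde u)=\tuple{\pack(\tilde u_1),\pack(\tilde u_2)}$; the first component is handled by (c) for $\sigma$, and the second by the induction hypothesis for $\rho$ at the extended interpretation $\gamma[x:=\tilde u_1]$, using the identity $\pack(\gamma)[x:=\pack(\tilde u_1)]=\pack(\gamma[x:=\tilde u_1])$ so that the membership conditions at $\pack(\gamma)$ and at $\gamma$ line up. For $\prod$ one additionally uses that the domain is set-level, so part (b) gives ${\cal V}_\Gamma\double{\sigma}\pack(\gamma)=\pack({\cal V}_\Gamma\double{\sigma}\gamma)=\{\pack(\tilde x):\tilde x\in{\cal V}_\Gamma\double{\sigma}\gamma\}$, whence every point of the domain is a packed value and the pointwise condition on $\tilde f$ can be pushed through $\pack$.

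\textbf{The main obstacle} is reconciling the $\pack$ operation with the star operation in exactly these dependent cases. After (c) for $\sigma$ tells us $\tilde u_1\in{\cal V}_\Gamma\double{\sigma^{*}}\gamma$ --- and \emph{not} $\tilde u_1\in{\cal V}_\Gamma\double{\sigma}\gamma$, which fails when $\sigma$ is a class such as $\sett$ --- the interpretation $\gamma[x:=\tilde u_1]$ lives in ${\cal V}\double{\Gamma;\intype{x}{\sigma^{*}}}$ rather than in ${\cal V}\double{\Gamma;\intype{x}{\sigma}}$. Hence the induction hypothesis for the body $\rho$ must be invoked over the \emph{starred} subcontext, which forces the whole induction to be stated for arbitrary extended contexts (those whose set variables may be declared by either $\sett$ or $\bij$) and forces the definedness-transfer clause of (c), that ${\cal V}_\Gamma\double{\tau^{*}}\gamma$ is defined whenever ${\cal V}_\Gamma\double{\tau}\pack(\gamma)$ is, to be used recursively to know that $\rho$ and $\rho^{*}$ are defined over that subcontext. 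The same tension appears for the predicate of an $S$-type and inside quantifiers, where one needs the auxiliary fact that starring is semantically invisible on set-level expressions, namely ${\cal V}_\Gamma\double{e^{*}}\gamma={\cal V}_\Gamma\double{e}\gamma$ for set-level $e$; I would establish this as a corollary of (b) and (c) via injectivity of $\pack$ and thread it through the $S$ and $\prod$ cases. Getting this bookkeeping of contexts and definedness exactly right, rather than any single clause, is the crux.
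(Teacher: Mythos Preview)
Your approach is exactly what the paper does: it states that the Pack Lemma ``can be proved by induction on the recursion index'' and then omits the proof entirely. Your proposal follows this same induction-on-recursion-index strategy, paralleling the Bijectivity and Commutation Lemmas as the paper implicitly intends, and you have supplied the clause-by-clause analysis and the key observation about $\pack$ carrying bijective sets to basic sets that the paper leaves to the reader.
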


The pack lemma can be proved by induction on the recursion index. We omit the proof.

\begin{proof}[Proof of the Functor Lemma]
We must show that under the conditions of the lemma we have $\Gamma;\;\intype{x}{\sigma^*} \models  \intype{e[x]}{\tau^*}$.
Consider $\gamma \in {\cal V}\double{\Gamma}$ and $\tilde{u} \in {\cal V}_\Gamma\double{\sigma^*}\gamma$.  By condition (a) of the pack lemma
we have $\pack(\gamma) \in {\cal V}\double{\Gamma}$.
By condition (c) of the pack lemma we have $\pack(\tilde{u}) \in {\cal V}_\Gamma\double{\sigma}\pack(\gamma)$.
These together give $\pack(\gamma[x:=\tilde{u}]) \in {\cal V}\double{\Gamma;\intype{x}{\sigma}}$.  We then have
$${\cal V}_\Gamma\double{e[x]}\pack(\gamma[x:=\tilde{u}]) \in {\cal V}_\Gamma\double{\tau}\pack(\gamma[x:=\tilde{u}])$$
By condition (b) of the pack lemma we then have
$$\pack({\cal V}_\Gamma\double{e[x]}\gamma[x:=\tilde{u}]) \in {\cal V}_\Gamma\double{\tau}\pack(\gamma[x:=\tilde{u}])$$
By condition (c) of the pack lemma we have 
$${\cal V}_\Gamma\double{e[x]}\gamma[x:=\tilde{u}] \in {\cal V}_\Gamma\double{\tau^*}\gamma[x:=\tilde{u}]$$
which proves the result.
\end{proof}

\subsection{Isomorphism and Isomorphism Congruence}
\label{subsec:isomorphism}

Part (c) of the pack lemma states that if $\tau$ is a defined class under $\Gamma$ in the base language then $\tau^*$ is defined under $\Gamma$ in the extended language.
We can now give the following definition of isomorphism.

\begin{definition}[Isomorphism]
  If in the base language we have $\Gamma \models \inntype{\tau}{\class}$ and $\Gamma \models \intype{u}{\tau}$ and $\Gamma \models \intype{v}{\tau}$ then for $\gamma \in {\cal V}\double{\Gamma}$
  we define ${\cal V}_\Gamma\double{u =_\tau v}\gamma$
  to be true if there exists $\vec{u} \in {\cal V}_\Gamma\double{\tau^*}\gamma$ with $L(\vec{u}) = {\cal V}_\Gamma\double{u}\gamma$ and $R(\vec{u}) = {\cal V}_\Gamma\double{v}\gamma$.
\end{definition}

\begin{theorem}[Isomorphism Congruence]
  If $\Gamma;\intype{x}{\sigma} \models \intype{e[x]}{\tau}$ where $\Gamma$, $\sigma$, $\tau$ and $e[x]$ do not contain $\bij$, and $x$ does not occur in $\tau$,
  then $\Gamma \models u =_\sigma v$ implies $\Gamma \models e[u] =_\tau e[v]$.
\end{theorem}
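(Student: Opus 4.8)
The plan is to process the statement pointwise in $\gamma$: for each $\gamma \in {\cal V}\double{\Gamma}$ I would take the isomorphism witness supplied by $\Gamma \models u =_\sigma v$, transport it through $e[\cdot]$ using the functor lemma (Lemma~\ref{lem:functor}), and then read off its two projections with the commutation lemma. Fix $\gamma \in {\cal V}\double{\Gamma}$. Since $\Gamma$ is a base-language context, every such $\gamma$ assigns only basic values, so $L(\gamma) = R(\gamma) = \gamma$; this is what ultimately lets me strip the outer projections. Unfolding the definition of isomorphism, the hypothesis $\Gamma \models u =_\sigma v$ hands me a bijective value $\vec{u} \in {\cal V}_\Gamma\double{\sigma^*}\gamma$ with $L(\vec{u}) = {\cal V}_\Gamma\double{u}\gamma$ and $R(\vec{u}) = {\cal V}_\Gamma\double{v}\gamma$.

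First I would manufacture the candidate witness in $\tau^*$. The hypotheses that $\Gamma$, $\sigma$, $\tau$, $e[x]$ avoid $\bij$ put us exactly in the base language, so the functor lemma applied to $\Gamma;\intype{x}{\sigma} \models \intype{e[x]}{\tau}$ yields $\Gamma;\intype{x}{\sigma^*} \models \intype{e[x]}{\tau^*}$ in the extended language. Because $\gamma[x:=\vec{u}] \in {\cal V}\double{\Gamma;\intype{x}{\sigma^*}}$ by clause (V8), the value $\vec{w} := {\cal V}_{\Gamma;\intype{x\;}{\;\sigma^*}}\double{e[x]}\gamma[x:=\vec{u}]$ lies in ${\cal V}_{\Gamma;\intype{x\;}{\;\sigma^*}}\double{\tau^*}\gamma[x:=\vec{u}]$, and since $x$ does not occur in $\tau$ (hence not in $\tau^*$) this set is just ${\cal V}_\Gamma\double{\tau^*}\gamma$. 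So $\vec{w} \in {\cal V}_\Gamma\double{\tau^*}\gamma$ is a legitimate candidate witness for $e[u] =_\tau e[v]$, and it only remains to identify $L(\vec{w})$ and $R(\vec{w})$.

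The core computation is $L(\vec{w}) = {\cal V}_\Gamma\double{e[u]}\gamma$. Since $e[x]$ is a set-level expression defined under $\Gamma;\intype{x}{\sigma^*}$, part (c) of the commutation lemma gives $L(\vec{w}) = {\cal V}_{\Gamma;\intype{x\;}{\;\sigma^*}}\double{e[x]}\,L(\gamma[x:=\vec{u}])$, and $L(\gamma[x:=\vec{u}]) = L(\gamma)[x:=L(\vec{u})] = \gamma[x:=L(\vec{u})]$ using $L(\gamma)=\gamma$. Now $L(\vec{u}) = {\cal V}_\Gamma\double{u}\gamma$ is a basic value lying in ${\cal V}_\Gamma\double{\sigma}\gamma$ (as $\Gamma \models \intype{u}{\sigma}$ is presupposed by the formula $u =_\sigma v$), so $\gamma[x:=L(\vec{u})]$ is a basic interpretation in the base-language context ${\cal V}\double{\Gamma;\intype{x}{\sigma}}$. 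Because the value of the $\bij$-free expression $e[x]$ at a fixed basic interpretation is computed compositionally and is insensitive to the surrounding declaration, the conservative extension lemma lets me rewrite this extended-language evaluation over $\sigma^*$ as the base-language evaluation ${\cal V}_{\Gamma;\intype{x}{\sigma}}\double{e[x]}\gamma[x:=L(\vec{u})]$. A standard substitution lemma — that ${\cal V}_{\Gamma;\intype{x}{\sigma}}\double{e[x]}\gamma[x:={\cal V}_\Gamma\double{u}\gamma] = {\cal V}_\Gamma\double{e[u]}\gamma$ — then closes the chain, giving $L(\vec{w}) = {\cal V}_\Gamma\double{e[u]}\gamma$. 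The identical argument with $R$ in place of $L$ gives $R(\vec{w}) = {\cal V}_\Gamma\double{e[v]}\gamma$. Hence $\vec{w}$ witnesses ${\cal V}_\Gamma\double{e[u] =_\tau e[v]}\gamma = \true$, and since $\gamma$ was arbitrary, $\Gamma \models e[u] =_\tau e[v]$.

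I expect the main obstacle to be the bookkeeping of the third paragraph rather than any single deep step: correctly threading one evaluation between the extended language (where the functor lemma can produce a genuinely \emph{bijective} witness) and the base language (where ${\cal V}_\Gamma\double{e[u]}\gamma$ lives), and cleanly isolating the substitution lemma relating the syntactic substitution $e[u]$ to the extension $\gamma[x:=\cdot]$ of the variable interpretation. That substitution lemma is routine by structural induction on $e$, but it is the one ingredient not already packaged as a named result; everything genuinely load-bearing — that $e[\cdot]$ transports bijective witnesses, and that $L$ and $R$ commute with the value function — is delivered ready-made by the functor and commutation lemmas.
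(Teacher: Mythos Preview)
Your proposal is correct and follows essentially the same route as the paper: obtain the witness $\vec{u}\in{\cal V}_\Gamma\double{\sigma^*}\gamma$, push it through $e[\cdot]$ via the functor lemma to get a witness in ${\cal V}_\Gamma\double{\tau^*}\gamma$, and then use part~(c) of the commutation lemma together with $L(\gamma)=\gamma$ to identify the two projections with ${\cal V}_\Gamma\double{e[u]}\gamma$ and ${\cal V}_\Gamma\double{e[v]}\gamma$. You are in fact more explicit than the paper about the final equality ${\cal V}_{\Gamma;\intype{x}{\sigma^*}}\double{e[x]}\gamma[x:=L(\vec{u})] = {\cal V}_\Gamma\double{e[u]}\gamma$, correctly isolating the context-insensitivity of compositional evaluation and the substitution lemma that the paper simply asserts in one step.
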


\begin{proof}[Proof of Isomorphism Congruence]
  To show $\Gamma \models e[u] =_\tau e[v]$ consider $\gamma \in {\cal V}\double{\Gamma}$.
  Let $\tilde{u}$ be ${\cal V}_\Gamma\double{u}\gamma$ and let $\tilde{v}$ be
  ${\cal V}_\Gamma\double{v}\gamma$. We are given $\Gamma \models u =_\sigma v$ and
  by the definition of $=_\sigma$ we then have that there exists a bijective
  value $\vec{u} \in {\cal V}_\Gamma\double{\sigma^*}\gamma$ with $L(\vec{u}) = \tilde{u}$ and $R(\vec{u}) = \tilde{v}$.  We then have
  $\rho[x:=\vec{u}] \in {\cal V}\double{\Gamma;\;\intype{x}{\sigma^*}}$. Let $\tilde{e}[\tilde{u}]$ be ${\cal V}_{\Gamma;\intype{x\;}{\;\sigma^*}}\double{e[x]}\gamma[x:=\vec{u}]$.
  By the functor lemma we then have $\tilde{e}[\tilde{u}] \in {\cal V}_\Gamma\double{\tau^*}\gamma$.  Since $\Gamma$ does not contain $\bij$ every value in $\gamma$ is basic and
  we have $L(\gamma) = \gamma$. By part (c) of the commutation lemma we then have
  $$L(\tilde{e}[\tilde{u}]) = {\cal V}_{\Gamma;\;\intype{x\;}{\;\sigma^*}}\double{e[x]}\gamma[x:=L(\vec{u})] = {\cal V}_\Gamma\double{e[u]}\gamma$$
  $$R(\tilde{e}[\tilde{u}]) = {\cal V}_{\Gamma;\;\intype{x\;}{\;\sigma^*}}\double{e[x]}\gamma[x:=R(\vec{u})] = {\cal V}_\Gamma\double{e[v]}\gamma$$
  which implies the result.
\end{proof}

\section{Summary}
We have presented well-formedness conditions on purely set-theoretic notation guaranteeing that the expressions built from these notations respect
isomorphism as commonly understood in mathematics.  Isomorphism, symmetry, canonicality, functors, natural transformations and cryptomorphism all emerge
from these well-formedness conditions on set-theoretic language.


\begin{thebibliography}{10}

\bibitem{COQ}
B.~Barras, S.~Boutin, C.~Cornes, J.~Courant, J.C. Filliatre, E.~Gimenez,
  H.~Herbelin, G.~Huet, C.~Munoz, C.~Murthy, et~al.
\newblock The coq proof assistant reference manual: Version 6.1.
\newblock INRIA Research Report, 1997.

\bibitem{Birk}
  Birkhoff, G.
  \newblock Lattice Theory, 3rd edition.
  \newblock American Mathematical Society Colloquium Publications, Vol. XXV. 1967.
  
\bibitem{GRPD}
Martin Hofmann and Thomas Streicher.
\newblock The groupoid interpretation of type theory.
\newblock In {\em Twenty-five years of constructive type theory (Venice,
  1995)}. Oxford Univ. Press, New York, 1998.

\bibitem{mltt}
Per~Martin L\"{o}f.
\newblock An intuitionistic theory of types: predicative part.
\newblock In {\em Logic Colloquium '73 (Bristol, 1973)}, volume~80 of {\em
  Studies in Logic and the Foundations of Mathematics}. North-Holland, 1975.

\bibitem{Rota}
  Crapo, H. and Rota, G.-C.
  \newblock On the foundations of combinatorial theory: Combinatorial geometries.
  \newblock M.I.T. Press, Cambridge, Mass. 1970.
\end{thebibliography}
\end{document}